\theoremstyle{definition}
\newtheorem{problem}{Problem}
\newtheorem*{example*}{Example}
\newtheorem{definition}{Definition}
\newtheorem{proposition}{Proposition}
\newtheorem{remark}{Remark}
\newtheorem{case study}{Case Study}
\title{\LARGE \bf
Robust Multi-Agent Coordination from CaTL+ Specifications
}
\author{Wenliang Liu$^{1}$, Kevin Leahy$^{2}$, Zachary Serlin$^{2}$, Calin Belta$^{1}$ 
\thanks{This work was partially supported by the NSF under grant IIS-2024606.
DISTRIBUTION STATEMENT A. Approved for public release. Distribution is unlimited.
This material is based upon work supported by the Under Secretary of Defense for Research and Engineering under Air Force Contract No. FA8702-15-D-0001. Any opinions, findings, conclusions or recommendations expressed in this material are those of the author(s) and do not necessarily reflect the views of the Under Secretary of Defense for Research and Engineering.}
\thanks{$^{1}$Wenliang Liu and Calin Belta are with Boston University, Boston, MA, USA
        {\tt\small wliu97@bu.edu, cbelta@bu.edu}}%
\thanks{$^{2}$Kevin Leahy and Zachary Serlin are with MIT Lincoln Laboratory, Lexington MA 02421, USA
        {\tt\small kevin.leahy@ll.mit.edu, zachary.serlin@ll.mit.edu}}%
}
\begin{document}

\maketitle
\thispagestyle{empty}
\pagestyle{empty}

\begin{abstract}

We consider the problem of controlling a heterogeneous multi-agent system required to satisfy temporal logic requirements. Capability Temporal Logic (CaTL) was recently proposed to formalize such specifications for deploying a team of autonomous agents with different capabilities and cooperation requirements. In this paper, we extend CaTL to a new logic CaTL+, which is more expressive than CaTL and has semantics over a continuous workspace shared by all agents. We define two novel robustness metrics for CaTL+: the traditional robustness and the exponential robustness. The latter is sound, differentiable almost everywhere and eliminates masking, which is one of the main limitations of the traditional robustness metric. We formulate a control synthesis problem to maximize CaTL+ robustness and propose a two-step optimization method to solve this problem. Simulation results are included to illustrate the increased expressivity of CaTL+ and the efficacy of the proposed control synthesis approach.

\end{abstract}

\section{INTRODUCTION}

Due to their expressivity and similarity to natural languages, temporal logics, such as Linear Temporal Logic (LTL) \cite{baier2008principles} and Signal Temporal Logic (STL) \cite{maler2004monitoring}, have been widely used to formalize specifications for control systems. Many recent works proposed methodologies to generate control strategies for dynamical systems from such specifications. Roughly, these works can be grouped in two categories. For specifications given in LTL and fragments of LTL, abstractions and automata-based synthesis have been shown to work for certain classes of systems with low-dimensional state spaces \cite{belta2017formal}. 
For temporal logics with semantics over real-valued signals, such as STL, control synthesis can be formulated as an optimization problem, which can be solved via Mixed Integer Programming (MIP) \cite{belta2019formal} 
or gradient-based methods \cite{pant2017smooth}, 
\cite{gilpin2020smooth}. 

Planning and controlling multi-agent systems from temporal logic specifications is a challenging problem that received a lot of attention in recent years. In \cite{chen2011synthesis}, the authors used distributed formal synthesis to allocate global task specifications expressed as LTL formulas to locally interacting agents with finite dynamics. Related works \cite{schillinger2018,kantaros2020,luo2021} proposed extensions to more realistic models and scenarios. The authors of \cite{Liu2021} used STL and Spatial Temporal Reach and Escape Logic (STREL) as specification languages to capture the connectivity constraints of a multi-robot team.

The above approaches do not scale for large teams. 
Very recent work has focused on development of logics and synthesis algorithms specifically tailored for such situations. The authors of \cite{sahin2017provably} proposed Counting LTL (cLTL), which is used to define tasks for large groups of identical agents. \emph{Counting constraints} specify the number of agents that need to achieve a certain goal. As long as enough agents achieve the goal, it does not matter which specific subgroup does that. Similar ideas are used in Capability Temporal Logic (CaTL) \cite{leahy2021scalable}. The atomic unit of a CaTL formula is a \emph{task}, which specifies the number of agents with certain capability that need to reach some region and stay there for some time. Since agents can have different capabilities, CaTL is appropriate for heterogeneous teams. Unlike cLTL, CaTL is a fragment of STL, and allows for concrete time requirements.

The expressivities of both cLTL and CaTL are limited by the definition of \emph{counting constraints} and \emph{tasks}, respectively. For example, neither cLTL nor CaTL can require that ``$3$ agents eventually reach region A'' without requiring that the $3$ agents reach region A at the same time. This can be restrictive. Consider, for example, a disaster relief scenario, where a team of robots needs to deliver supplies to an affected area. We require that enough supplies be delivered, i.e., enough robots eventually reach the affected area, rather than enough robots stay in the affected area at the same time. In fact, a robot is supposed to go on to the next task after drop off the supply without waiting for the other robots. To solve this problem, an extension of cLTL, called cLTL+, was proposed in \cite{sahin2017synchronous}, where a two-layer LTL structure was defined. However, cLTL+ can not specify concrete time requirements. To address this limitation, we extend CaTL to a novel logic called CaTL+, which has a two-layer structure similar to cLTL+. 
The authors of \cite{buyukkocak2021planning} extended STL with integral predicates, which also enables asynchronous satisfaction, but it can only specify how many times a service is needed. A CaTL+ task can specify the number of agents that need to satisfy an arbitrary STL formula. 
Another related work is CensusSTL \cite{xu2016census}, which is also a two-layer STL that refers to mutually exclusive subsets of a group, rather than capabilities of agents. Also, \cite{xu2016census} focuses on inference of formulas from data, rather than control synthesis. 

CaTL has both qualitative semantics, i.e., a specification is satisfied or violated, and quantitative semantics (known as robustness), which defines how much a specification is satisfied. The robustness of CaTL is an integer representing the minimum number of agents that can be removed from (added to) a given team in order to invalidate (satisfy) the given formula. Such a robustness metric is discontinuous and cannot represent how strongly each \emph{task} is satisfied. In this paper, for the new logic CaTL+, we define two new quantitative semantics, called traditional robustness and exponential robustness. Both are continuous with respect to their inputs and can measure how strongly a \emph{task} is satisfied. A higher robustness indicates more agents reach the region and stay closer to the center for a longer time. Similar to the traditional robustness of STL \cite{donze2010robust}, the new traditional robustness only takes into account the most satisfaction or violation point, and ignores all other points, which is called \emph{masking}. The proposed exponential robustness eliminates \emph{masking} by considering all subformulas, all time points and all agents, which makes the control synthesis from CaTL+ easier and the result more robust. This new definition also includes a novel robustness for STL, which has the strongest mask-eliminating property in promise of soundness compared with existing STL robustness metrics.  

We also formulate and solve a centralized control synthesis problem from CaTL+. Control synthesis for cLTL~\cite{sahin2017provably}, cLTL+~\cite{sahin2017synchronous}, CaTL~\cite{leahy2021scalable} and STL with integral predicates~\cite{buyukkocak2021planning} are solved in a graph environment using a (mixed) Integer Linear Program (ILP), where the controls are transitions between vertices in a graph. In this paper, we consider a continuous workspace shared by all agents. Each agent has its own discrete time dynamics with continuous state and control spaces. We propose a two-step optimization: a global optimization followed by a gradient-based local optimization to obtain the controls that maximize CaTL+ robustness. 

The contribution of this paper is threefold. First, we extend CaTL to CaTL+, which is more expressive and works for continuous environments. Second, we define two kinds of robustness functions for CaTL+: the traditional and the exponential robustness. The latter is differentiable almost everywhere and eliminates masking, which makes it suitable for gradient-based optimization. This new definition also includes a novel STL robustness metric. Third, we propose a two-step optimization strategy for CaTL+ control synthesis. The resulting controls steer the system to satisfy the given specification robustly according to the simulation results. 

\section{System Model and Notation}
\label{sec:system}

Let $|\mathcal{S}|$ be the cardinality of a set $\mathcal{S}$. We use bold symbols to represent trajectories and calligraphic symbols for sets. For $z\in\mathbb R$, we define $[z]_+=\max\{0,z\}$ and $[z]_-=-[-z]_+$. Consider a team of agents labelled from a finite set $\mathcal{J}$. Let $Cap$ denote a finite set of agent capabilities. We assume that the agents operate in a continuous workspace $\mathcal{S}\subseteq \mathbb{R}^{n_s}$. 
\begin{definition}
An \emph{agent} $j\in\mathcal{J}$ is a tuple $A_j = \langle \mathcal X_j, x_{j}(0), Cap_j, \mathcal U_j, f_j, l_j \rangle$ where: $\mathcal X_j\subseteq\mathbb{R}^{n_{x,j}}$ is its state space; $x_{j}(0)\in\mathcal X_j$ is its initial state; $Cap_j\subseteq Cap$ is its finite set of capabilities; $\mathcal U_j\subseteq\mathbb{R}^{n_{u,j}}$ is its control space; $f_j: \mathcal{X}_j\times\mathcal{U}_j\rightarrow\mathcal{X}_j$ is a differentiable function giving the discrete time dynamics of agent $j$:
\begin{equation}
    \label{eq:system}
    x_{j}(t+1) = f_j\big(x_{j}(t),u_{j}(t)\big),\quad t=0,1,\ldots,H-1,
\end{equation}
where $x_{j}(t)$ and $u_{j}(t)$ are the state and control at time $t$, $H$ is a finite time horizon determined by the task (detailed later); $l_j: \mathcal X_j\rightarrow \mathcal S$ is a differentiable function that maps the state of agent $j$ to a point in the workspace shared by all agents (this enables heterogeneous state spaces): 
\vspace{-2pt}
\begin{equation}
\label{eq:ws}
    s_j(t) = l_j\big(x_j(t)\big),\quad t=0,1,\ldots,H.
    \vspace{-2pt}
\end{equation}
The trajectory of an agent $j$, called an \emph{individual trajectory}, is a sequence $\mathbf{s}_j=s_{j}(0) \ldots s_{j}(H)$. We assume that $\cup_{j\in \mathcal{J}}Cap_j= Cap$. 
\end{definition}

\begin{figure}
\floatbox[{\capbeside\thisfloatsetup{capbesideposition={right,top},capbesidewidth=4.5cm}}]{figure}[\FBwidth]
  {\includegraphics[width=3.5cm]{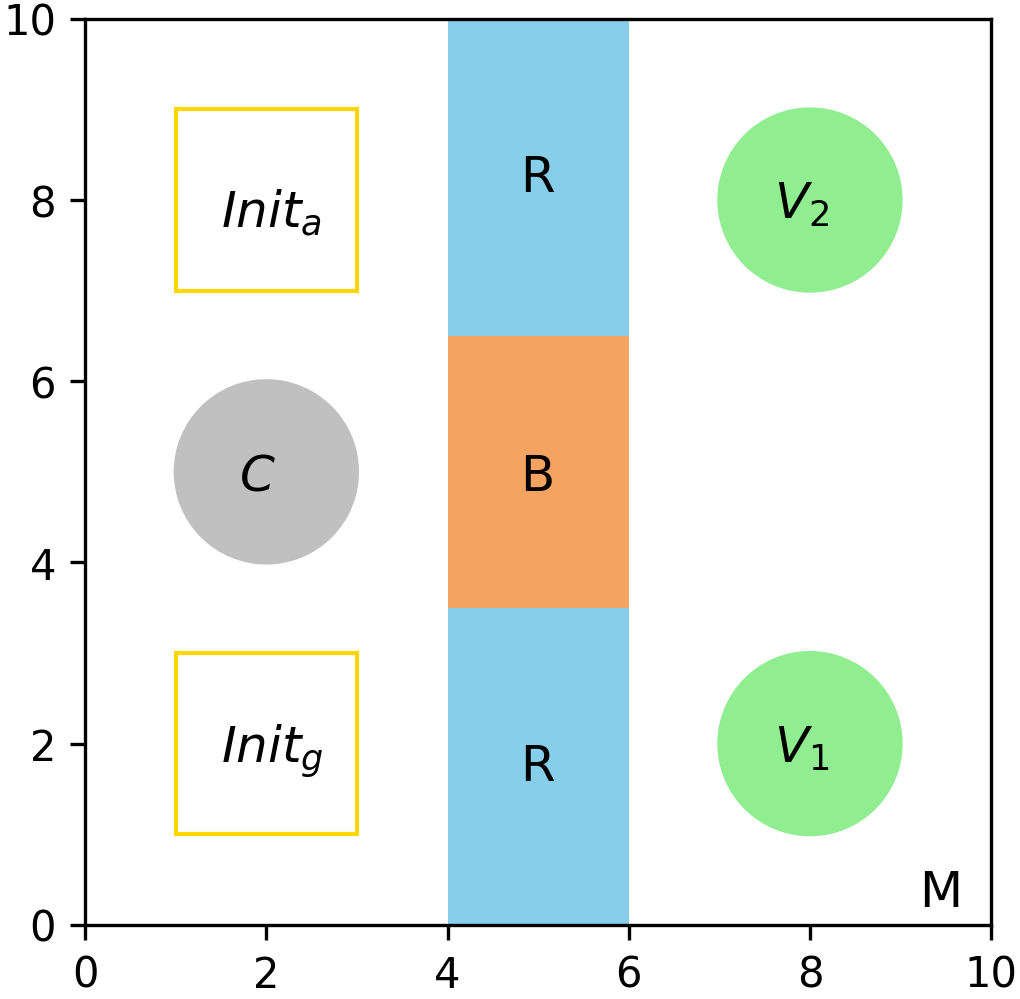}}
  {\caption{\small An earthquake emergency response scenario. $M$ is the entire square workspace. $Init_a$ and $Init_g$ are the initial regions for the aerial and ground vehicles. $C$ is where the agents pick up supplies. $B$ is a bridge and the $2$ rectangles $R$ correspond to the river. $V_1$ and $V_2$ are $2$ affected villages.}\label{fig:map}}
  \vspace{-10pt}
\end{figure}

Given a team of agents $\{A_j\}_{j\in\mathcal{J}}$, the \emph{team trajectory} is defined as a set of pairs $\mathbf S = \{(\mathbf s_j,Cap_j)\}_{j\in\mathcal J}$, which captures all the \emph{individual trajectories} with the corresponding capabilities. Let $\mathcal J_c = \{j\;|\;c\in Cap_j\}$ be the set of agent indices with capability $c$. Let $\mathbf u_j = u_{j}(0)\ldots u_{j}(H-1)$ be the sequence of controls for agent $j$.

\begin{example*}
\label{eg:1}
Consider an earthquake emergency response scenario. The workspace $\mathcal S \subset \mathbb R^2$ is shown in Fig. \ref{fig:map}. There are $4$ ground vehicles $j\in\{1,2,3,4\}$ and $2$ aerial vehicles $j\in\{5,6\}$, totaling $6$ robots indexed from $\mathcal{J} = \{1,2,3,4,5,6\}$ in the workspace. A river $R$ runs through this area and a bridge $B$ goes across the river. All ground vehicles are identical. The dynamics $f_j$, $j\in\{1,2,3,4\}$ are given by
\vspace{-1pt}
\begin{equation}
    \label{eq:system1}
    \begin{aligned}
    p_{x,j}(t+1) &=p_{x,j}(t)+v_j(t)\cos\theta_j(t),\\
    p_{y,j}(t+1) &=p_{y,j}(t)+v_j(t)\sin\theta_j(t),\\
    \theta_j(t+1) &=\theta_j(t)+\omega_j(t),
    \end{aligned}
    \vspace{-1pt}
\end{equation}
where the state $x_j$ is the 2D position and orientation $[p_{x,j}\ p_{y,j}\ \theta_j]$, the control $u_j$ is the forward and angular speed $[v_j\ \omega_j]$, the state space $\mathcal X_j = \mathcal X_g\subset\mathbb R^3$, the control space $\mathcal U_j = \mathcal U_g\subset\mathbb R^2$,
the initial state $x_j(0)$ is a singleton randomly sampled in region $Init_g$ with randomly sampled orientation $\theta_j\in[\frac 1 4 \pi, \frac 3 4 \pi]$. The function $l_j(x_j) = [p_{x,j}\ p_{y,j}] = s_j$ maps the state of agent $j$ to a position in the workspace $\mathcal S$. The identical capabilities are given by $Cap_j = \{``Delivery", ``Ground"\}$, $j\in\{1,2,3,4\}$. All the aerial vehicles are identical. For $j\in\{5,6\}$, $f_j$ are given by
\vspace{-1pt}
\begin{equation}
    \label{eq:system2}
    \begin{aligned}
    p_{x,j}(t+1) &=p_{x,j}(t)+v_{x,j}(t),\\
    p_{y,j}(t+1) &=p_{y,j}(t)+v_{y,j}(t),
    \end{aligned}
    \vspace{-1pt}
\end{equation}
where the state $x_j$ is the 2D position $[p_{x,j}\ p_{y,j}]$, the control $u_j$ is the speed $[v_{x,j}\ v_{y,j}]$, the state space $\mathcal X_j = \mathcal X_a\subset\mathbb R^2$, the control space $\mathcal U_j = \mathcal U_a\subset\mathbb R^2$,
the initial state $x_j(0)$ is a singleton  randomly sampled in the region $Init_a$, the identity mapping $l_j(x_j) = x_j = s_j$ maps the state of agent $j$ to a position in $\mathcal S$. The identical capabilities are given by $Cap_j = \{``Delivery", ``Inspection"\}$, $j\in\{5,6\}$. For this scenario, we assume the following set of requirements: (1) $6$ agents with capability $``Delivery"$ should pick up supplies from region $C$ within $8$ time units; (2) $3$ agents with capability $``Delivery"$ should deliver supplies to the affected village $V_1$ within $25$ time units, and $3$ agents with capability $``Delivery"$ should deliver supplies to the affected village $V_2$ within $25$ time units; (3) the bridge might be affected by the earthquake so any agent with capability $``Ground"$ cannot go over it until $2$ agents with capability $``Inspection"$ inspect it within $5$ time units; (4) agents with capability $``Ground"$ should always avoid entering the river $R$; (5) Since the load of the bridge is limited, at all times no more than $1$ agent with capability $``Ground"$ can be on $B$; (6) $6$ agents with capability $``Delivery"$ should always stay in region $M$.
 
\end{example*}

\section{CaTL+ Syntax and Semantics}
\label{sec:CaTL+}

In this section we introduce \emph{Capability Temporal Logic plus} (CaTL+), a logic for specifying requirements for multi-agent systems. CaTL+ has two layers: the \emph{inner logic} and the \emph{outer logic}. The inner logic is identical to STL \cite{maler2004monitoring} and is defined on an \emph{individual trajectory}. In the previous example, the inner logic can specify ``eventually visit $V_1$ within 25 time units".  The outer logic specifies behaviors of the \emph{team trajectory}. With reference to the same example, the outer logic could specify ``$3$ agents with capability ``$Delivery$" should eventually visit $V_1$ within 25 time units". 

\subsection{Inner Logic}
\begin{definition}[Syntax \cite{maler2004monitoring}]
Given an individual trajectory $\mathbf s$~\footnote{For simplicity we drop the subscript $j$ from $\mathbf s$.}, the syntax of the inner logic can be recursively defined as:
\vspace{-2pt}
\begin{equation}
\label{eq:syntax-in}
\varphi:=True\;|\;\mu \; | \; \neg\varphi \; | \; \varphi_1\land\varphi_2 \; | \; \varphi_1\lor\varphi_2 \;|  \;  \varphi_1 \mathbf{U}_{[a,b]} \varphi_2,
\vspace{-2pt}
\end{equation}
where $\varphi$, $\varphi_1$ and $\varphi_2$ are inner logic (STL) formulas, $\mu$ is a predicate in the form $h\big(s(t)\big)\geq0$. We assume $h:\mathcal S\rightarrow \mathbb R$ is differentiable in this paper. $\neg$, $\land$, $\lor$ are the Boolean operators \emph{negation}, \emph{conjunction} and \emph{disjunction} respectively. $U_{[a,b]}$ is the temporal operator \emph{Until}, where $[a,b]$ is the time interval containing all integers between $a$ and $b$ with $a,b\in\mathbb Z_{\geq 0}$. The temporal operators \emph{Eventually} and \emph{Always} can be defined as $\mathbf F_{[a,b]}\varphi = True\mathbf{U}_{[a,b]}\varphi$ and $\mathbf G_{[a,b]}\varphi = \neg\mathbf F_{[a,b]}\neg\varphi$. 
\end{definition}

The qualitative semantics of the inner logic, i.e., whether a formula $\varphi$ is satisfied by an individual trajectory $\mathbf s$ at time $t$ (denoted by $(\mathbf s,t) \models \varphi$) is same as STL \cite{maler2004monitoring}. In plain English, $\varphi_1\mathbf{U}_{[a,b]}\varphi_2$ means ``$\varphi_2$ must become \emph{True} at some time point in $[a,b]$ and $\varphi_1$ must be $True$ at all time points before that''. $\mathbf{F}_{[a,b]}\varphi$ states ``$\varphi$ is \emph{True} at some time point in $[a,b]$'' and  $\mathbf{G}_{[a,b]}\varphi$ states ``$\varphi$ must be \emph{True} at all time points in $[a,b]$''. 

\subsection{Outer Logic}
 The basic component of the outer logic, which with a slight abuse of terminology we will refer to as CaTL+, is a \emph{task} $T=\langle\varphi,c,m\rangle$, where $\varphi$ is an inner logic formula, $c\in Cap$ is a capability, and $m$ is a positive integer. The syntax of CaTL+ is defined over a \emph{team trajectory} $\mathbf S$ as:
\vspace{-1pt}
\begin{equation}
\label{eq:syntax-out}
\Phi:=True\;|\;T \; | \; \neg\Phi \; | \; \Phi_1\land\Phi_2 \; | \; \Phi_1\lor\Phi_2 \;|  \;  \Phi_1 \mathbf{U}_{[a,b]} \Phi_2,
\vspace{-2pt}
\end{equation}
where $\Phi$, $\Phi_1$ and $\Phi_2$ are CaTL+ formulas, $T=\langle\varphi,c,m\rangle$ is a task, and the other operators are the same as in the inner logic defined above. Before defining the qualitative semantics of CaTL+, we introduce a counting function $n(\mathbf S,c,\varphi,t)$:
\vspace{-2pt}
\begin{equation}
    n(\mathbf S,c,\varphi,t) = \sum_{j\in\mathcal J_c}I\big( (\mathbf s_j,t)\models \varphi \big),
    \vspace{-2pt}
\end{equation}
where $I$ is an indicator function, i.e., $I = 1$ if $(\mathbf s_j,t)\models \varphi$ and $I = 0$ otherwise. $n(\mathbf S,c,\varphi,t)$ captures how many \emph{individual trajectories} $\mathbf s_j$ with capability $c$ in the \emph{team trajectory} $\mathbf S$ satisfy an inner logic formula $\varphi$ at time $t$. The qualitative semantics of the \emph{outer logic} is similar to the one of the \emph{inner logic}, except for it involves \emph{tasks} rather than predicates.

\begin{definition}
A \emph{team trajectory} $\mathbf S$ satisfies a \emph{task} $T=\langle\varphi, c, m\rangle$ at $t$, denoted by $(\mathbf S, t)\models T$, iff $n(\mathbf S,c,\varphi,t) \geq m$.
\end{definition}

In words, a task $T=\langle\varphi,c,m\rangle$ is satisfied at time $t$ if and only if at least $m$ \emph{individual trajectories} of agents with capability $c$ satisfy $\varphi$ at time $t$. The semantics for the other operators are identical with the ones in the \emph{inner logic}. We denote the satisfaction of a CaTL+ formula $\Phi$ at time $t$ by a \emph{team trajectory} $\mathbf S$ as $(\mathbf S,t)\models\Phi$. Note that specifying no more than $m$ agents with capability $c$ that could satisfy $\varphi$ can be formulated as $\Phi=\neg\langle\varphi,c,m+1\rangle$. Let the \emph{time horizon} of a CaTL+ formula $\Phi$, denoted by $hrz(\Phi)$, be the closest time point in the future that is needed to determine the satisfaction of $\Phi$. Note that cooperative inner logic is not allowed in CaTL+ since $\varphi$ is defined for one agent only.

\begin{example*}
\label{eg:2}
(continued) Reaching a circular or rectangular region can be formulated as an inner logic formula easily. For brevity, we use $s\in B$ (other regions are the same) to represent the inner logic formula of reaching region $B$. The requirements in the previous example can be formulated as CaTL+ formulas:
(1) $\Phi_1=\langle \mathbf F_{[0,8]} s\in C,\ ``Delivery",\ 6 \rangle$;
    (2) $\Phi_2=\langle \mathbf F_{[0,25]} s\in V_1,\ ``Delivery",\ 3 \rangle \land \langle \mathbf F_{[0,25]} s\in V_2,\ ``Delivery",\ 3 \rangle$;
    (3) $\Phi_3=\neg\langle s\in B,\ ``Ground",\ 1 \rangle \mathbf  U_{[0,5]}\langle s\in B,\ ``Inspection",\ 2 \rangle$ ;
    (4) $\Phi_4=\mathbf G_{[0,25]}\langle \neg(s\in R),\ ``Ground",\ 4 \rangle$;
    (5) $\Phi_5=\mathbf G_{[0,25]}\neg\langle s\in B,\ ``Ground",\ 2 \rangle$;
    (6) $\Phi_6=\mathbf G_{[0,25]}\langle s\in M,\ ``Delivery",\ 6 \rangle$. The overall specification for the system is $\Phi = \bigwedge_{i=1}^6 \Phi_i$, with $hrz(\Phi)=25$.
\end{example*}

\subsection{CaTL+ Quantitative Semantics}
The qualitative semantics defined above provides a \emph{True} or \emph{False} value, meaning that the CaTL+ formula is satisfied or not. In this section, we define its quantitative semantics, also known as robustness. This is a real value that measures how much a formula is satisfied. We introduce two robustness metrics for CaTL+, the traditional robustness (inspired by \cite{donze2010robust}) and the exponential robustness.

For simplicity, we give the definition of CaTL+ robustness in a structured manner. We show that a robustness metric for CaTL+ can be captured by only the robustness for conjunction and \emph{task}. According to the De Morgan law, disjunction can be replaced by conjunction and negation, i.e., $\Phi_1\lor\Phi_2=\neg(\neg\Phi_1\land\neg\Phi_2)$. Meanwhile, the temporal operator ``\emph{always}" and ``\emph{eventually}" can be regarded as conjunction and disjunction evaluated over individual time steps. For any robustness metric, the robustness of a predicate $h\big(s(t)\big)\geq0$ is $h\big(s(t)\big)$ and the robustness of $\neg\Phi$ is the negative of the robustness of $\Phi$ (see \cite{varnai2020robustness} for details).

\subsubsection{Traditional Robustness}
\label{sec:trad_ro}

With a slight abuse of notations, we denote the traditional robustness of an inner logic (STL) formula $\varphi$ over an \emph{individual trajectory} $\mathbf s$ and an outer logic (CaTL+) formula $\Phi$ over a \emph{team trajectory} $\mathbf S$ at time $t$ as $\rho(\mathbf s, \varphi, t)$ and $\rho(\mathbf S, \Phi, t)$, respectively. The robustness of the conjunction over $M$ subformulas with robustness values $\rho_1,\ldots,\rho_M$ is defined as \cite{donze2010robust}:
\vspace{-2pt}
\begin{equation}
    \label{eq:traditional_and}
    \mathcal A^{trad}(\rho_1,\ldots,\rho_M) =\min(\rho_1,\ldots,\rho_M).
    \vspace{-2pt}
\end{equation}
To define the robustness of a \emph{task}, we introduce a function $L_m:\mathbb R^{n_c}\rightarrow \mathbb R$ that finds the $m^{th}$ largest element in a vector with $n_c$ elements, where $m\leq n_c$. Then the traditional robustness for a \emph{task} $T=\langle \varphi,c,m\rangle$ over a \emph{team trajectory} $\mathbf S$ at time $t$ is defined as:
\vspace{-2pt}
\begin{equation}
    \label{eq:traditional_task}
    \rho(\mathbf S,\langle \varphi,c,m\rangle,t) = L_m([\rho(\mathbf s_j, \varphi,t)]_{j\in\mathcal J_c}),
\end{equation}
where $[\rho(\mathbf s_j, \varphi,t)]_{j\in\mathcal J_c}\in\mathbb R^{|\mathcal J_c|}$ is a vector containing all $\rho(\mathbf s_j, \varphi,t)$ with $j\in\mathcal J_c$. That is, the traditional robustness of a \emph{task} $T=\langle \varphi,c,m\rangle$ is the $m^{th}$ largest robustness value of $\varphi$ over all \emph{individual trajectories} with capability $c$. The greater this value, the stronger the satisfaction of the task $T$ will be. Then the traditional robustness of CaTL+ can be recursively constructed from \eqref{eq:traditional_and} and \eqref{eq:traditional_task}. 

An important property of a robustness metric is soundness, i.e., positive robustness indicates (qualitative) satisfaction of the formula, while negative robustness corresponds to violation. Formally, we have:

\begin{definition}[Soundness]
A robustness metric $\rho(\mathbf S,\Phi,t)$ is \emph{sound} if for any formula $\Phi$, $\rho(\mathbf S,\Phi,t)\geq 0$ iff $(\mathbf S,t)\models\Phi$.
\end{definition}
\begin{proposition}
\label{pp:sound-trad}
Traditional robustness for CaTL+ is sound.
\end{proposition}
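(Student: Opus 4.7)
The plan is to proceed by structural induction on the CaTL+ formula $\Phi$, using the soundness of the standard STL traditional robustness (which applies to the inner logic $\varphi$) as a black box from \cite{donze2010robust}. The base cases are $\Phi=True$ (trivial) and $\Phi=T=\langle\varphi,c,m\rangle$ (the key new case); the inductive cases are negation, conjunction, disjunction, and until, which are handled exactly as in STL because the outer logic has the same syntactic shape with \emph{tasks} playing the role of predicates.

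The main step, and the only genuinely new obstacle, is the \emph{task} base case: I must show that
\[
L_m\bigl([\rho(\mathbf s_j,\varphi,t)]_{j\in\mathcal J_c}\bigr)\geq 0 \;\Longleftrightarrow\; n(\mathbf S,c,\varphi,t)\geq m.
\]
I would argue as follows. Set $v_j=\rho(\mathbf s_j,\varphi,t)$ for $j\in\mathcal J_c$. By soundness of STL traditional robustness applied to the inner logic, $v_j\geq 0$ iff $(\mathbf s_j,t)\models\varphi$, so by the definition of the counting function,
\[
n(\mathbf S,c,\varphi,t)=\bigl|\{j\in\mathcal J_c:v_j\geq 0\}\bigr|.
\]
By definition of $L_m$ (the $m$-th largest element), $L_m([v_j]_{j\in\mathcal J_c})\geq 0$ iff at least $m$ of the $v_j$ are nonnegative, i.e., $n(\mathbf S,c,\varphi,t)\geq m$, which by the qualitative semantics is equivalent to $(\mathbf S,t)\models T$.

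For the inductive cases, conjunction is immediate from $\mathcal A^{trad}=\min$: $\min(\rho_1,\dots,\rho_M)\geq 0$ iff every $\rho_i\geq 0$, which by the induction hypothesis is equivalent to every conjunct being satisfied. Negation uses $\rho(\mathbf S,\neg\Phi,t)=-\rho(\mathbf S,\Phi,t)$ together with the induction hypothesis; disjunction is then reduced via the De Morgan identity $\Phi_1\lor\Phi_2=\neg(\neg\Phi_1\land\neg\Phi_2)$ noted in the paper. The until case, and derived $\mathbf F$, $\mathbf G$, are discrete-time and therefore expand into finite conjunctions and disjunctions over time indices in $[a,b]$, so their soundness follows from the Boolean cases already handled.

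The only subtle point will be the usual boundary behavior at robustness exactly zero, which is standard in STL soundness arguments: the convention $\rho\geq 0$ iff satisfaction (rather than strict inequality) must be applied consistently through negation, and this is inherited from the inner-logic soundness result we invoke. With that convention in place, the induction closes uniformly across all operators and the proposition follows.
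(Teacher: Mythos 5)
Your proposal is correct and follows essentially the same route as the paper: reduce soundness to the \emph{task} case via the observation that $L_m([\rho(\mathbf s_j,\varphi,t)]_{j\in\mathcal J_c})\geq 0$ iff at least $m$ entries are nonnegative, and defer all other operators to the soundness of standard STL traditional robustness. You are somewhat more explicit than the paper in spelling out the intermediate step that inner-logic soundness is what links ``at least $m$ nonnegative entries'' to $n(\mathbf S,c,\varphi,t)\geq m$, which the paper compresses into a single ``which means,'' but the argument is the same.
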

\begin{proof}
Consider a \emph{task} $T=\langle \varphi,c,m\rangle$. Obviously, $L_m([\rho(\mathbf s_j, \varphi,t)]_{j\in\mathcal J_c})\geq0$ iff at least $m$ entries in $[\rho(\mathbf s_j, \varphi,t)]_{j\in\mathcal J_c}$ are non-negative, which means that $(\mathbf S,t)\models T$. Since all the other operators are identical with STL \cite{donze2010robust}, the traditional robustness for CaTL+ is sound.
\end{proof}
\vspace{-5pt}


Next, we introduce another robustness metric for CaTL+, called exponential robustness. Using exponential robustness makes the optimization process for control synthesis easier. We will compare it with traditional robustness and discuss its advantages after we formally define it. 

\subsubsection{Exponential robustness}
\label{sec:exp_ro}
We use $\eta(\mathbf s, \varphi, t)$ and $\eta(\mathbf S, \Phi, t)$ to denote the exponential robustness of the inner and outer logic. As discussed earlier, we only need to define the robustness for conjunction and \emph{task}. Consider the conjunction over $M$ subformulas with robustness $\eta_1,\ldots,\eta_M$. Similar to \cite{varnai2020robustness}, we first define an effective robustness measure, denoted by $\eta_i^{conj}$, $i=1,\ldots,M$, for each subformula:
\begin{equation}
\label{eq:exp-conj-eff}
    \eta_{i}^{conj} \coloneqq \left\{
    \begin{aligned}
    &\eta_{min}e^{\frac{\eta_i-\eta_{min}}{\eta_{min}}} \quad & \eta_{min} < 0 \\
    &\eta_{min}(2 - e^{\frac{\eta_{min}-\eta_i}{\eta_{min}}}) \quad & \eta_{min} > 0 \\
    & 0 \quad & \eta_{min} = 0
    \end{aligned}\right.
\end{equation}
where $\eta_{min}=\min(\eta_1,\ldots,\eta_M)$. 
The relation between $\eta_i^{conj}$ and $\eta_i$ is shown in Fig. \ref{fig:exp-and+} and Fig.\ref{fig:exp-and-}. Intuitively, \eqref{eq:exp-conj-eff} ensures that $\eta_i^{conj}$ has the same sign with $\eta_{min}$, $\forall i=1,\ldots,M$, and $\eta_i^{conj}$ increases monotonically with $\eta_i$. Note that $\eta_i^{conj}=\eta_{i}$ when $\eta_i=\eta_{min}$.  We define the exponential robustness for conjunction as \eqref{eq:exp-and} which ensures soundness:
\vspace{-2pt}
\begin{equation}
\label{eq:exp-and}
    \mathcal A^{exp}(\eta_1,\ldots,\eta_M) = \beta\eta_{min} +  (1-\beta)\frac{1}{M}\sum_{i=1}^M \eta_i^{conj},
    \vspace{-2pt}
\end{equation}
where $\beta\in[0,1]$ balances the contribution between $\eta_{min}$ and the mean of $\eta_i^{conj}$ (same sign as $\eta_{min}$). The exponential robustness turns to be the traditional robustness when $\beta=1$. 

Now consider a \emph{task} $T=\langle \varphi,c,m\rangle$. For brevity, let $\eta_j=\eta(\mathbf s_j, \varphi, t)$ when $\varphi$ and $t$ are clear from the context. We reorder $\{\eta_j\}_{j\in\mathcal J_c}$ from the largest to the smallest, i.e., $\eta_{j_1}\geq\ldots\geq\eta_{j_m}\geq\ldots\geq\eta_{j_n}$,
where $j_k\in\mathcal J$, $k=1,\ldots,n$,  $n=|\mathcal J_c|$. Note that $\eta_{j_m}$ is the critical $m^{th}$ largest robustness, i.e., $\eta_{j_m}=L_m([\eta(\mathbf s_j, \varphi,t)]_{j\in\mathcal J_c})$. We define another effective robustness $\eta_{j_k}^{task}$ for each $j_k$, $k=1,\ldots,|\mathcal J_c|$:
\vspace{-2pt}
\begin{equation}
    \label{eq:exp-task-eff}
    \eta_{j_k}^{task} \coloneqq \left\{
    \begin{aligned}
    &\frac{2\alpha(e^{\eta_{j_m}}-1)}{1+e^{-\alpha(\eta_{j_k}-\eta_{j_m})}} \quad & \eta_{j_m} > 0, \\
    &\frac{-2\alpha(e^{-\eta_{j_m}}-1)}{1+e^{\alpha(\eta_{j_k}-\eta_{j_m})}} \quad & \eta_{j_m} \leq 0,
    \end{aligned}\right.
    \vspace{-2pt}
\end{equation}
where $\alpha>0$. The relation between $\eta_{j_k}^{task}$ and $\eta_{j_k}$ is shown in Fig. \ref{fig:exp-task+} and Fig. \ref{fig:exp-task-} ($\alpha=1$). Similar to conjunction, \eqref{eq:exp-task-eff} ensures that $\eta_{j_k}^{task}$ has the same sign with the critical $\eta_{j_m}$, $\forall k=1,\ldots,|\mathcal J_c|$ and $\eta_{j_k}^{task}$ increases monotonically with $\eta_{j_k}$. Note that $\eta_{j_k}^{task} = sgn(\eta_{j_m})\alpha(e^{|\eta_{j_m}|}-1)$ when $k=m$. Again, we define the exponential robustness for a \emph{task} as the mean of $\eta_{j_k}^{task}$ to ensure soundness:
\vspace{-2pt}
\begin{equation}
\label{eq:exp-task}
    \eta(\mathbf S, \langle \varphi,c,m\rangle, t) = \frac{1}{|\mathcal J_c|}\sum_{k=1}^{|\mathcal J_c|} \eta_{j_k}^{task}.
    \vspace{-2pt}
\end{equation}

\begin{figure}
    \centering
    \begin{subfloat}[\small$\eta_{min}>0$\label{fig:exp-and+}]{
    \includegraphics[width=0.4\linewidth]{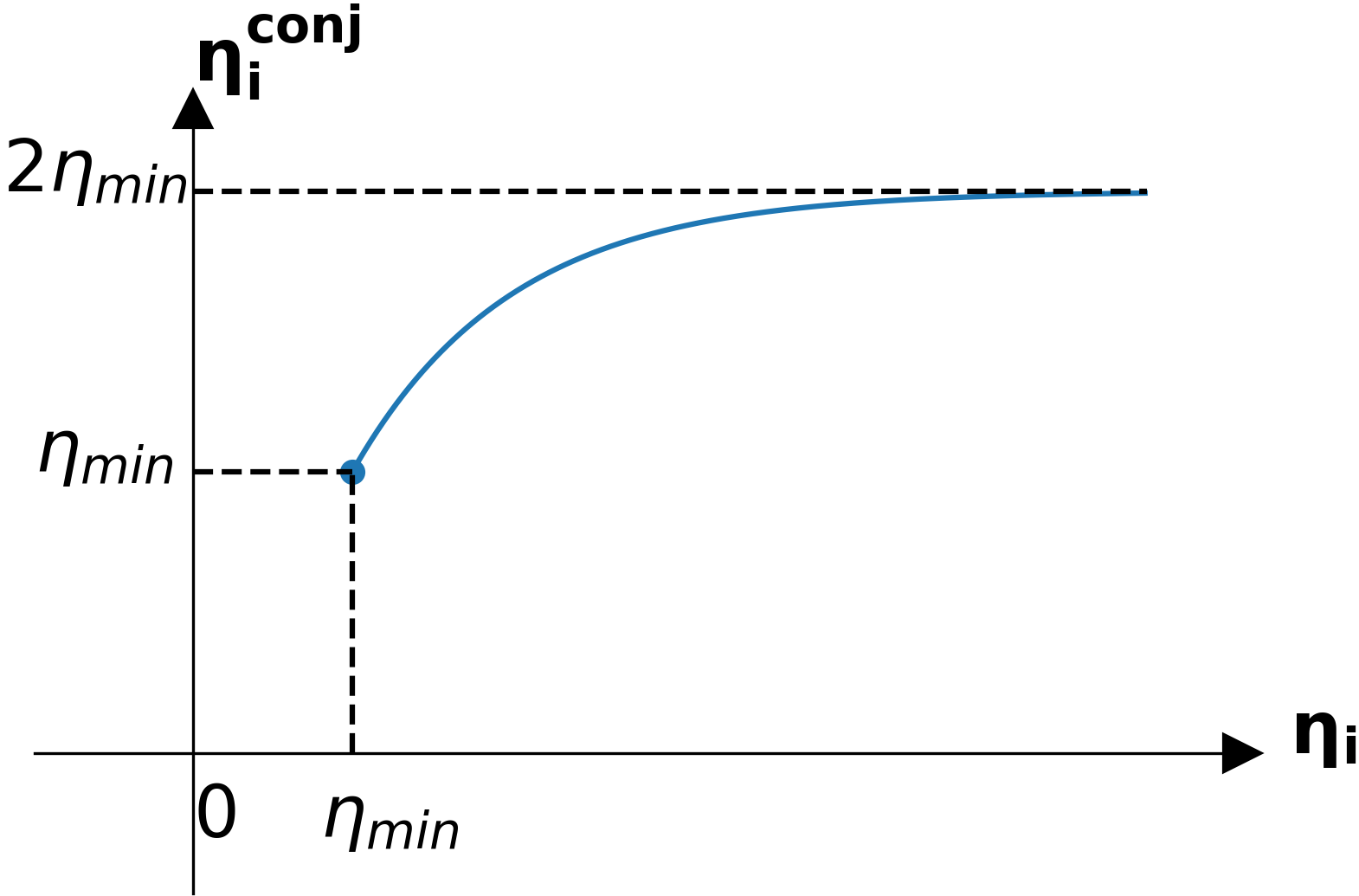}}
    \end{subfloat}
    \quad
    \begin{subfloat}[\small$\eta_{min}<0$\label{fig:exp-and-}]{
    \includegraphics[width=0.4\linewidth]{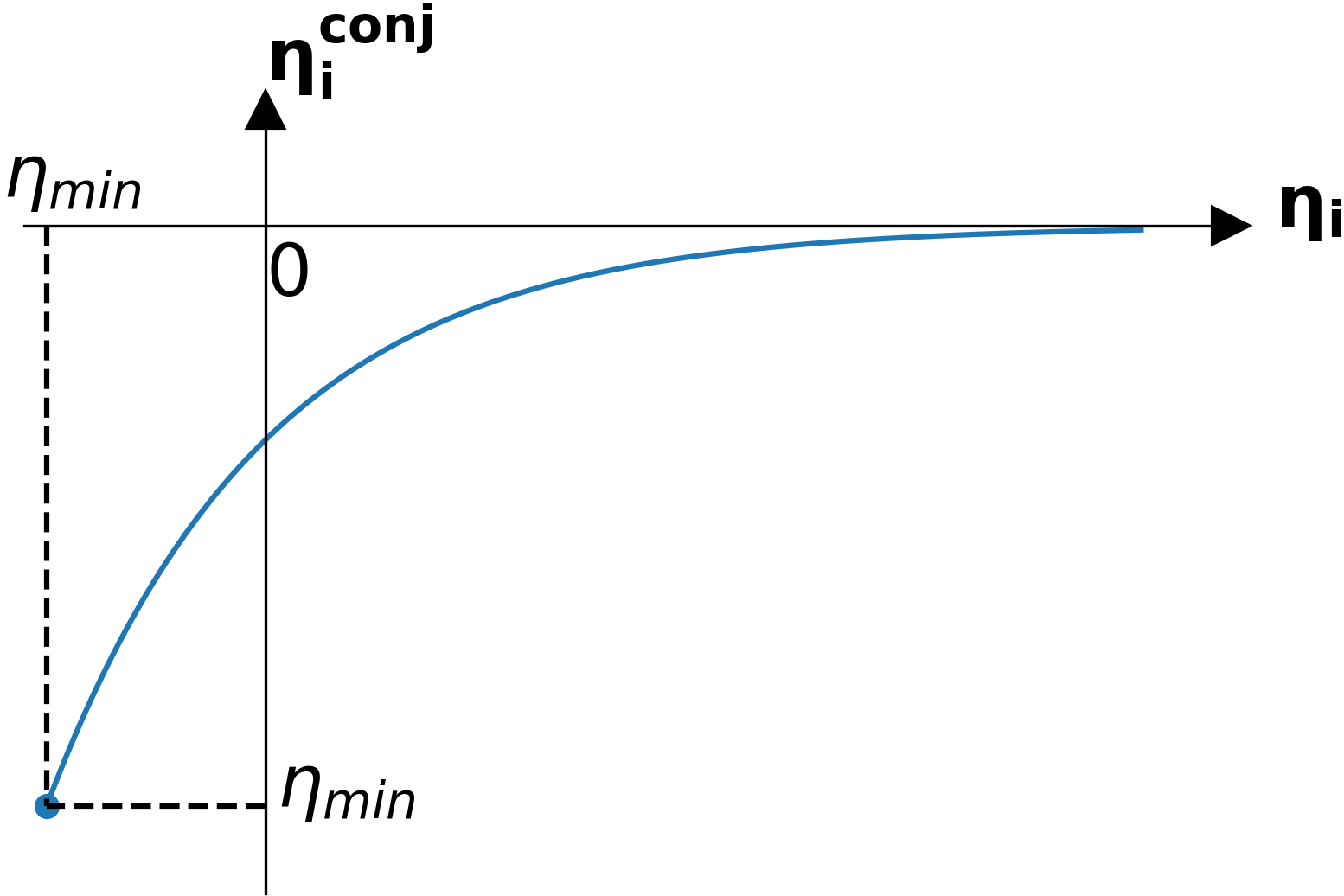}}
    \end{subfloat}
    \vspace{+5pt}
    
    \begin{subfloat}[\small$\eta_{j_m}>0$\label{fig:exp-task+}]{
    \includegraphics[width=0.4\linewidth]{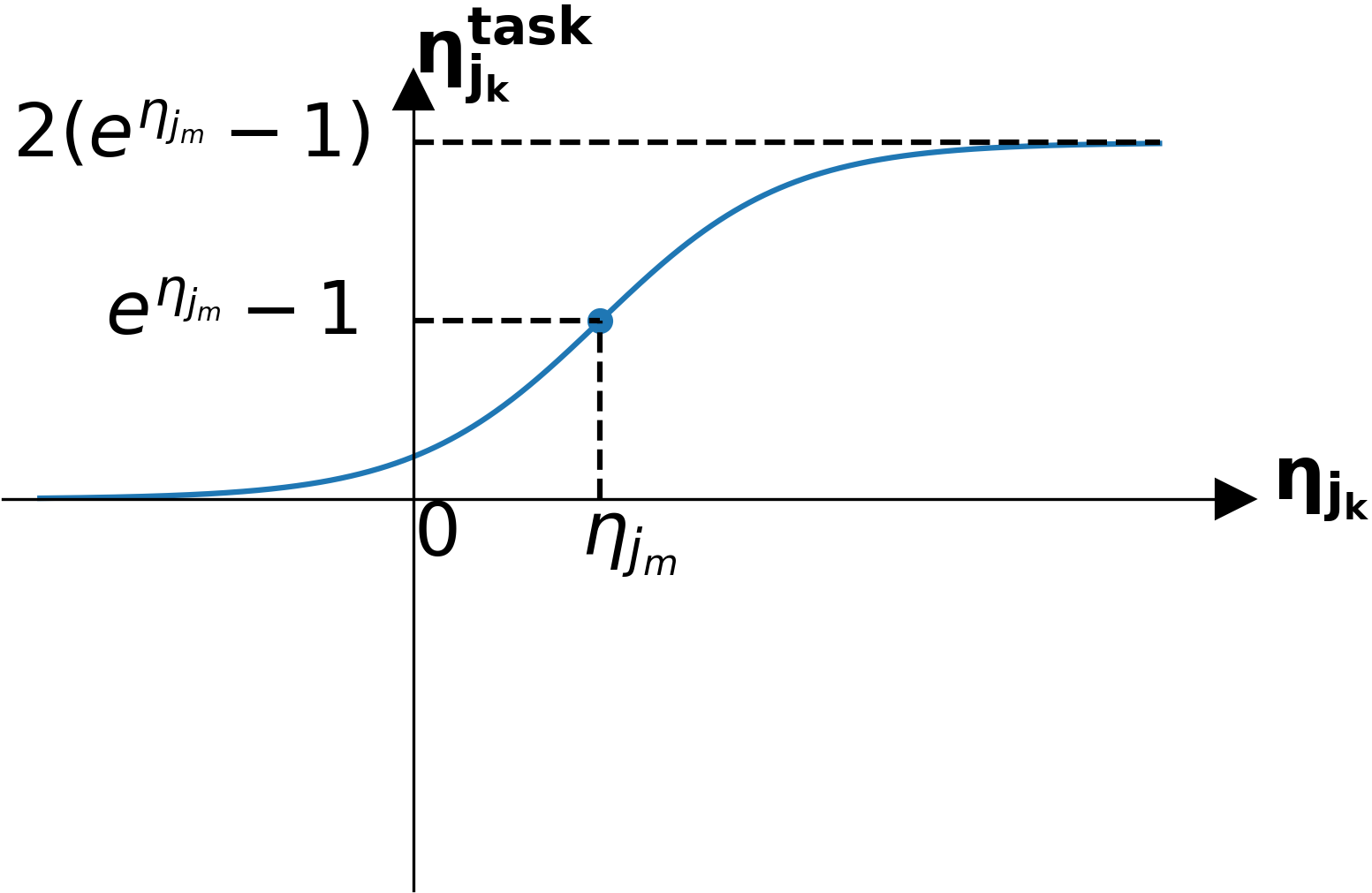}}
    \end{subfloat}
    \quad
    \begin{subfloat}[\small$\eta_{j_m}<0$\label{fig:exp-task-}]{
    \includegraphics[width=0.4\linewidth]{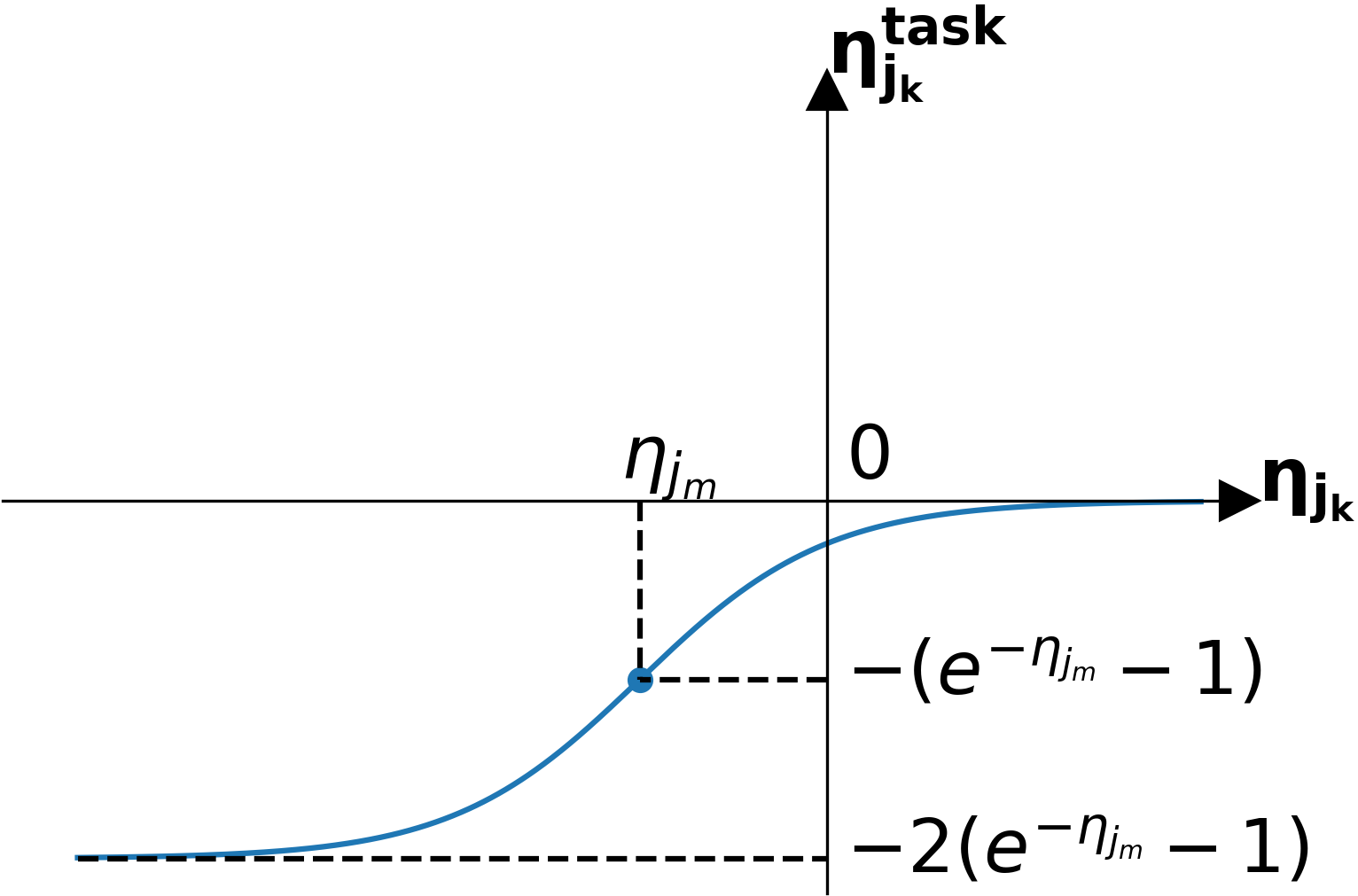}}
    \end{subfloat}
    \caption{\small (a), (b): Relation between $\eta_i^{conj}$ and $\eta_i$ while holding $\eta_{min}$ constant and $\eta_{i}\neq\eta_{min}$. (c), (d): Relation between $\eta_{j_k}^{task}$ and $\eta_{j_k}^{task}$ ($\alpha=1$) while holding $\eta_{j_m}$ constant and $k\neq m$.\vspace{-10pt}}
    \label{fig:exp}
    \vspace{-5pt}
\end{figure}

When $\alpha\rightarrow\infty$, the exponential robustness of a task only depends on $\eta_{j_m}$. The exponential robustness for CaTL+ is recursively constructed from \eqref{eq:exp-and} and \eqref{eq:exp-task}. In the following, we discuss the properties of the exponential robustness. 

\begin{proposition}
\label{pp:sound}
Exponential robustness for CaTL+ is sound.
\end{proposition}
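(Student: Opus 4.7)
My plan is to proceed by structural induction on the formula $\Phi$. Since, as the authors note, any CaTL+ formula reduces to predicates, tasks, negations, and conjunctions (disjunction via De Morgan, and $\mathbf F$, $\mathbf G$, $\mathbf U$ by unrolling over the finite integer time interval into conjunctions and disjunctions), it suffices to establish the base cases (predicates and tasks), the negation step, and the conjunction step. Predicates and negation inherit soundness directly from the STL definitions. Conjunction at the inner-logic level is already covered by the argument for the exponential conjunction in \cite{varnai2020robustness}, and the same reasoning lifts verbatim to the outer logic; what is genuinely new is the task combinator \eqref{eq:exp-task}.

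The engine driving both combinator steps is a sign-preservation lemma for the effective robustness values. For the conjunction case I will inspect the three branches of \eqref{eq:exp-conj-eff} and verify $\mathrm{sgn}(\eta_i^{conj}) = \mathrm{sgn}(\eta_{\min})$ for every $i$: in the $\eta_{\min} > 0$ branch, $\eta_i \geq \eta_{\min}$ makes $(\eta_{\min}-\eta_i)/\eta_{\min} \leq 0$, hence $2 - e^{(\eta_{\min}-\eta_i)/\eta_{\min}} \geq 1$, and the prefactor $\eta_{\min}$ is positive; in the $\eta_{\min} < 0$ branch, $\eta_{\min}$ is negative and $e^{(\eta_i-\eta_{\min})/\eta_{\min}}$ is strictly positive; and at $\eta_{\min}=0$ the third branch hard-codes $\eta_i^{conj}=0$. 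Because $\beta \in [0,1]$, the convex combination \eqref{eq:exp-and} inherits the sign of $\eta_{\min}$, so $\mathcal{A}^{exp}(\eta_1,\ldots,\eta_M) \geq 0$ iff $\eta_{\min}\geq 0$ iff every $\eta_i\geq 0$, which by the inductive hypothesis is equivalent to satisfaction of the conjunction.

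For a task $T = \langle \varphi, c, m\rangle$ I follow the same pattern using \eqref{eq:exp-task-eff}. When $\eta_{j_m} > 0$, the numerator $2\alpha(e^{\eta_{j_m}}-1)$ and the logistic denominator $1+e^{-\alpha(\eta_{j_k}-\eta_{j_m})}$ are both strictly positive, so $\eta_{j_k}^{task} > 0$ for all $k$; when $\eta_{j_m} < 0$, the numerator $-2\alpha(e^{-\eta_{j_m}}-1)$ is strictly negative and the denominator positive, so $\eta_{j_k}^{task} < 0$; and at $\eta_{j_m}=0$ the second branch gives $\eta_{j_k}^{task}=0$ uniformly in $k$. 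Thus the average in \eqref{eq:exp-task} has the same sign as $\eta_{j_m}$. Combining with the inductive hypothesis for the inner logic, $\eta(\mathbf S, T, t) \geq 0$ iff $\eta_{j_m} = L_m([\eta(\mathbf s_j,\varphi,t)]_{j\in\mathcal J_c}) \geq 0$ iff at least $m$ agents with capability $c$ satisfy $\varphi$ at time $t$, i.e., $n(\mathbf S, c, \varphi, t) \geq m$, which by Definition~2 is exactly $(\mathbf S, t) \models T$.

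The step I expect to require the most care is the treatment of the boundary cases $\eta_{\min}=0$ and $\eta_{j_m}=0$, where the piecewise definitions must be checked to return exactly $0$ and to align the sign-preservation lemma with the iff-at-zero required for soundness; in particular, one must be careful that the branch chosen at the boundary agrees with the one-sided limits of the neighboring branches, so that no hidden discontinuity flips a sign. Once these boundary checks are in place, the proposition follows from a routine induction lifting the combinator-level soundness to arbitrary CaTL+ formulas via the reduction of the remaining operators to conjunction and negation.
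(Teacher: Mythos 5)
Your proof is correct and follows essentially the same route as the paper: the paper's own argument is a one-line appeal to the soundness of the conjunction combinator \eqref{eq:exp-and} and the task combinator \eqref{eq:exp-task}, which in turn rests on exactly the branch-by-branch sign-preservation checks of \eqref{eq:exp-conj-eff} and \eqref{eq:exp-task-eff} that you carry out explicitly. You have simply supplied the details (including the boundary cases $\eta_{min}=0$ and $\eta_{j_m}=0$ and the structural induction over the remaining operators) that the paper leaves implicit.
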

\begin{proof}
This follows directly from the soundness of both conjunction and \emph{task}. 
\end{proof}
\vspace{-5pt}

In an optimal control problem, such as the one considered in  Sec. \ref{sec:formulation}, it is desirable to have a differentiable robustness allowing for gradient based optimization methods. 

\begin{proposition}
\label{pp:diff}
The exponential robustness $\eta(\mathbf S, \Phi, t)$ is continuous everywhere and differentiable almost everywhere with respect to individual trajectories $\mathbf s_j$, $j\in\mathcal J$.
\end{proposition}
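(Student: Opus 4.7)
My plan is a nested structural induction, outer on the CaTL+ formula $\Phi$ and inner on the STL subformula $\varphi$ carried inside each task. By the syntax \eqref{eq:syntax-in}--\eqref{eq:syntax-out}, together with De~Morgan and the standard unfolding of $\mathbf U_{[a,b]}$ into a finite conjunction/disjunction over the integer times in $[a,b]$, every formula reduces to a composition of predicates, negations, conjunctions, and tasks. It therefore suffices to show that each such operator preserves the pair of properties \emph{continuous everywhere} and \emph{differentiable almost everywhere} in $(\mathbf s_j)_{j\in\mathcal J}$. The base case is immediate: a predicate $h(s(t))\geq 0$ has robustness $h(l_j(x_j(t)))$, which is smooth because $h$, $l_j$ and $f_j$ are differentiable by assumption, and negation only flips the sign.

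The nontrivial step is the analysis of \eqref{eq:exp-and} and \eqref{eq:exp-task} viewed as maps of their real arguments. Each has two potential sources of non-smoothness: (i) the selection of the critical argument ($\eta_{min}$ for conjunction, $\eta_{j_m}=L_m([\eta_j]_{j\in\mathcal J_c})$ for task), which is non-differentiable exactly on the ``tie'' hyperplanes where two arguments coincide; and (ii) the sign-based case split in \eqref{eq:exp-conj-eff}--\eqref{eq:exp-task-eff}, which joins branches at the hyperplane where the critical argument vanishes. Off both the tie set and the zero hyperplane, the formulas are elementary compositions of exponentials, affine maps, and divisions with denominators bounded away from zero, and are therefore $C^\infty$. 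Continuity across ties is automatic, since $\min$ and $L_m$ are continuous. Continuity across $\eta_{min}=0$ I will verify via one-sided limits in \eqref{eq:exp-conj-eff}: when $\eta_i=\eta_{min}$ the exponent vanishes and $\eta_i^{conj}=\eta_{min}\to 0$, while when $\eta_i\neq\eta_{min}$ the exponent tends to $-\infty$, so the exponential factor collapses to $0$ and again $\eta_i^{conj}\to 0$; an analogous calculation matches the two branches of \eqref{eq:exp-task-eff} at $\eta_{j_m}=0$.

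Consequently, the non-smooth set for each operator is a finite union of codimension-one smooth hypersurfaces in parameter space, hence of Lebesgue measure zero. The last step is to pull back through the dynamics \eqref{eq:system} and the output maps $l_j$ via the chain rule: this preserves continuity globally, and the preimage of the exceptional set under these smooth compositions is again a finite union of smooth hypersurfaces in $(\mathbf s_j)_{j\in\mathcal J}$, hence still measure zero. The main obstacle will be the one-sided limit bookkeeping in \eqref{eq:exp-conj-eff}--\eqref{eq:exp-task-eff}, where the three (resp.\ two) branches must be matched continuously at the zero hyperplane; once that case analysis is carried out cleanly, everything else is a routine structural induction.
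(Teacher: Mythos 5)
Your proposal follows essentially the same route as the paper's proof: verify that the one-sided limits of $\eta_i^{conj}$ and $\eta_{j_k}^{task}$ vanish at the branch switch ($\eta_{min}=0$, resp.\ $\eta_{j_m}=0$) to obtain continuity, observe that everything is smooth away from the tie set and the zero set of the critical argument, and conclude almost-everywhere differentiability; the structural-induction wrapper and the chain-rule pullback through $f_j$ and $l_j$ that you add are exactly what the paper leaves implicit. The one caveat is your closing claim that the preimage of the exceptional set under these smooth compositions is automatically a measure-zero union of hypersurfaces --- that is false for arbitrary smooth maps (preimages of measure-zero sets can have positive measure) --- but the paper's own proof makes the same unargued leap when it asserts ``the set of non-differentiable points has measure $0$,'' so your argument matches the paper's in both approach and level of rigor.
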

\begin{proof}
\eqref{eq:exp-conj-eff} and \eqref{eq:exp-task-eff} are piece-wise continuous, where the switches happen at $\eta_{min}=0$ and $\eta_{j_m}=0$. For all $i$ and $k$:
\begin{equation*}
    0=\lim_{\eta_{min}\to 0^+}\eta_i^{conj} = \lim_{\eta_{min}\to 0^-}\eta_i^{conj} = \eta_i^{conj}|_{\eta_{min}=0},
\end{equation*}
\begin{equation*}
    0=\lim_{\eta_{j_m}\to 0^+}\eta_{j_k}^{task} = \lim_{\eta_{j_m}\to 0^-}\eta_{j_k}^{task} = \eta_{j_k}^{task}|_{\eta_{j_m}=0}.
\end{equation*}
Hence, $\eta_i^{conj}$ and $\eta_{j_k}^{task}$ are continuous with respect to $\eta_i$ and $\eta_{j_k}$. Since \eqref{eq:exp-and}, \eqref{eq:exp-task} and $h$ are differentiable, the continuity property is proved. When $\eta_{min}$ and $\eta_{j_m}$ are unique and nonzero, $\eta_i^{conj}$ and $\eta_{j_k}^{task}$ are differentiable with respect to $\eta_i$ and $\eta_{j_k}$. The set of non-differentiable points has measure $0$. Hence, $\eta(\mathbf S, \Phi, t)$ is differentiable almost everywhere.
\end{proof}

Although exponential robustness is not differentiable everywhere, in a numerical optimization process it rarely gets to the non-differentiable points. Moreover, these points are semi-differentiable, so even if they are met, we can still use the left or right derivative to keep the optimization running. 

The traditional robustness of CaTL+ is also sound (Proposition \ref{pp:sound-trad}). It is easy to prove that the traditional robustness is continuous everywhere and differentiable almost everywhere. The most important advantage of the exponential robustness over traditional robustness is that it eliminates \emph{masking}. In short, the traditional robustness only takes into account the minimum robustness in conjunction and the $m^{th}$ largest robustness in a \emph{task}. All the other subformulas and \emph{individual trajectories} have no contribution to the overall robustness. For example, consider formula $\mathbf F_{[0,5]} (s>3)$. Two trajectories $1,1,1,1,5$ and $1,2,3,4,5$ get the same traditional robustness score of $2$, though it is obvious that the later is more robust under disturbances.  Similarly, for a \emph{task} $\langle\varphi,c,3\rangle$, $3$ agents satisfying $\varphi$ and $4$ agents satisfying $\varphi$ to the same extent obtain same  traditional robustness, but the later is more robust to agent attrition. The exponential robustness addresses this \emph{masking} problem by taking into account the robustness for all subformulas, all time points, and all agents. As a result, the exponential robustness rewards the trajectories that satisfy the requirements at more time steps and with more agents. From an optimization point of view, our goal is to synthesize controls that maximize the CaTL+ robustness. If we use the traditional robustness, at each optimization step we can only modify the most satisfying or violating points. This may decelerate the optimization speed or even result in divergence.  
The exponential robustness makes the robustness-based control synthesis easier, and makes the results more robust. Formally, we have:

\begin{definition}\label{def:mask}[mask-eliminating]
The robustness of an operator $\mathcal O(\rho_1,\ldots,\rho_M)$ has the \emph{mask-eliminating} property if it is differentiable almost everywhere, and wherever it is differentiable, it satisfies:
\begin{equation}
\label{eq:shadow}
\frac{\partial\mathcal O(\rho_1,\ldots,\rho_M)}{\partial\rho_i}>0,\quad \forall i=1,\ldots,M,
\end{equation}
A robustness metric of CaTL+ has the mask-eliminating property if both conjunction and \emph{task} satisfy \eqref{eq:shadow}. 
\end{definition}


\begin{proposition}
Exponential robustness of CaTL+ has the mask-eliminating property.
\end{proposition}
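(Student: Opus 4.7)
The plan is to check Definition~\ref{def:mask} for exactly the two operators named there, the conjunction $\mathcal{A}^{exp}$ of \eqref{eq:exp-and} and the task operator of \eqref{eq:exp-task}. Differentiability almost everywhere is already given by Proposition~\ref{pp:diff}, so at any point of differentiability the minimum $\eta_{min}$ and the $m$th-largest value $\eta_{j_m}$ are unique and nonzero, and the identity of the argmin (respectively, of the $m$th order statistic) is locally stable under infinitesimal perturbation of the inputs. I would also take $\beta\in[0,1)$ throughout, since $\beta=1$ collapses $\mathcal{A}^{exp}$ to the pure minimum, which is exactly the masking case that motivates the definition.

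For $\mathcal{A}^{exp}$ I would split on whether the perturbed input $\eta_i$ coincides with $\eta_{min}$. If $\eta_i\neq\eta_{min}$, then locally $\eta_{min}$ does not depend on $\eta_i$; only the $i$th summand of \eqref{eq:exp-and} moves, and differentiating \eqref{eq:exp-conj-eff} in either sign regime leaves a positive exponential, hence $\partial\mathcal{A}^{exp}/\partial\eta_i=\tfrac{1-\beta}{M}\,\partial\eta_i^{conj}/\partial\eta_i>0$. If $\eta_i=\eta_{min}$, the chain rule couples all $M$ summands through the shared dependence on $\eta_{min}$: the direct piece contributes $\beta+\tfrac{1-\beta}{M}$ via the identity $\eta_i^{conj}|_{\eta_i=\eta_{min}}=\eta_{min}$, and each cross term reduces after short algebra to a positive exponential multiplied by $1-\eta_{i'}/\eta_{min}$, which is strictly positive since $\eta_{i'}>\eta_{min}$ by a case split on the sign of $\eta_{min}$. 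Summation yields the required strict positivity.

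The task operator is handled in parallel, with $\eta_{j_m}$ playing the role of $\eta_{min}$. If $j$ is not the $m$th-largest index, only $\eta_j^{task}$ depends locally on $\eta_j$, and differentiating \eqref{eq:exp-task-eff} yields a product of matched-sign factors (using $\alpha>0$ and the sign of $e^{|\eta_{j_m}|}-1$), which is strictly positive. If $j=j_m$, the full chain rule applies: the direct term at $k=m$ evaluates to $\alpha e^{|\eta_{j_m}|}/|\mathcal J_c|>0$, while the cross terms are where I expect the main technical obstacle. For $k\neq m$ the individual partial $\partial\eta_{j_k}^{task}/\partial\eta_{j_m}$ need not be positive in isolation, because raising $\eta_{j_m}$ both inflates the prefactor $\pm 2\alpha(e^{\pm\eta_{j_m}}-1)$ and translates the sigmoid argument in a direction that may reduce its weight. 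My strategy is to factor out the common positive prefactor and pair indices $k<m$ with $k>m$, exploiting the antisymmetry of the translated sigmoid about $\eta_{j_m}$ so that the weight shifts partially cancel across pairs, leaving a dominant positive contribution from the prefactor derivative; strict positivity of the overall sum then follows and, together with the conjunction case, gives the mask-eliminating property for CaTL+.
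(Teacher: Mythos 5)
Your treatment of the conjunction operator is essentially correct and is considerably more explicit than the paper's two-line sketch; in particular you are right that $\beta<1$ must be assumed (at $\beta=1$ the metric collapses to the minimum and masking returns), a hypothesis the paper never states. One algebraic slip: for $\eta_{min}>0$ the cross term $\partial\eta_{i'}^{conj}/\partial\eta_{min}$ equals $2-(1+r)e^{1-r}$ with $r=\eta_{i'}/\eta_{min}>1$, not ``a positive exponential times $1-\eta_{i'}/\eta_{min}$'' (that quantity is negative in this regime); the conclusion survives because $(1+r)e^{1-r}$ is strictly decreasing and equals $2$ at $r=1$. Your handling of the non-critical indices of the task operator is also fine. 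The genuine gap is in the cross terms of the task operator, exactly where you anticipated trouble --- but the pairing/antisymmetry strategy cannot close it, because the inequality you are trying to prove is actually false for general $\alpha>0$. Writing $A=e^{\eta_{j_m}}$ and $u_k=-\alpha(\eta_{j_k}-\eta_{j_m})$, a direct computation from \eqref{eq:exp-task-eff} gives, for $\eta_{j_m}>0$,
\[
\frac{\partial}{\partial\eta_{j_m}}\sum_{k}\eta_{j_k}^{task}
=\alpha A+\sum_{k\neq m}\frac{2\alpha\bigl(A+e^{u_k}\left(A(1-\alpha)+\alpha\right)\bigr)}{\bigl(1+e^{u_k}\bigr)^{2}}.
\]
If $\alpha\leq 1$ the bracket $A(1-\alpha)+\alpha$ is positive, every cross term is individually positive, and no pairing is needed --- the proof closes immediately. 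If $\alpha>1$ the bracket turns negative once $A>\alpha/(\alpha-1)$, and the cross terms can overwhelm the direct term: take $\alpha=5$, $m=1$, $|\mathcal J_c|=3$, $\eta_{j_1}=1$, $\eta_{j_2}=0.87$, $\eta_{j_3}=0.86$; the right-hand side evaluates to roughly $13.6-10.0-10.0<0$ at a point where the metric is differentiable, so \eqref{eq:shadow} fails.

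Your proposed rescue cannot work in this regime: with $m=1$ (the case above) there are no indices $k<m$ to pair against, and even when both sides of $\eta_{j_m}$ are populated, each positive cross term is bounded above by $2\alpha A$ while a negative one can reach $-\alpha B^2/\bigl(2(A+|B|)\bigr)$ with $B=A(1-\alpha)+\alpha$, which exceeds $2\alpha A$ in magnitude for $\alpha$ large enough, so no one-to-one cancellation is available. The correct repair is to add the hypothesis $\alpha\leq1$ (consistent with the paper's figures, which use $\alpha=1$), after which your outline goes through with ``each cross term is individually positive'' replacing the pairing step; the symmetric computation handles the $\eta_{j_m}<0$ branch with $A=e^{-\eta_{j_m}}$. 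Note that the paper's own proof is only a sketch (``standard derivations'') and does not surface this restriction, so you have in fact located a real defect in the stated proposition rather than merely failing to reproduce its proof.
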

\begin{proof}
$[$Sketch$]$ 
This follows from standard derivations. Note that $\partial \mathcal A^{exp} / \partial \eta_{min}$ depends on all $\eta_{i}^{conj}$, $i=1,\ldots,M$, and $\partial \eta(\mathbf S, \langle \varphi,c,m\rangle, t) / \partial \eta_{j_m}$ depends on all $\eta_{j_k}^{task}$, $k=1,\ldots,n$.
\end{proof}
\vspace{-7pt}

The \emph{mask-eliminating} property of exponential robustness tells us that the increase of any component in conjunction or \emph{task} results in the increase of the overall robustness. 
\begin{remark}
By standard derivations, it can also be proved that, $\partial \mathcal A^{exp} / \partial \eta_{min}>\partial \mathcal A^{exp} / \partial \eta_{i}$, $\forall \eta_i\neq \eta_{min}$. The further $\eta_i$ is from $\eta_{min}$, the smaller the partial derivative will be. Meanwhile, $\partial \eta(\mathbf S, \langle \varphi,c,m\rangle, t) / \partial \eta_{j_m}\geq\partial \eta(\mathbf S, \langle \varphi,c,m\rangle, t) / \partial \eta_{j_k}$, $\forall k\neq m$. The further $\eta_{j_k}$ is from $\eta_{j_m}$, the smaller the partial derivative will be. This is helpful because $\eta_{min}$ and $\eta_{j_m}$ are the most critical components which decide the satisfaction of the formula. 
\end{remark}

\begin{remark}
The exponential robustness for conjunction itself forms a novel robustness of STL, which has desired properties including soundness and mask-eliminating (the conjunction satisfies \eqref{eq:shadow}).  Other robustness metrics including Arithmetic-Geometric Mean (AGM) robustness \cite{mehdipour2019arithmetic} and learning robustness \cite{varnai2020robustness} are also sound and partially solve the \emph{masking} problem, but none of them satisfy the \emph{mask-eliminating} property (as shown in Fig.~\ref{fig:compare}).
The smooth max-min robustness from \cite{pant2017smooth,gilpin2020smooth} has the mask-eliminating property, but loses soundness (in a necessary and sufficient sense). To the best of our knowledge, exponential robustness is the first that satisfies both of these two properties. Sample behaviors of $\mathcal A^{exp}(\eta_1,\eta_2)$ for different $\eta_1$ and $\eta_2$ are depicted in Fig. \ref{fig:compare}. We can see that traditional and AGM robustness have $0$ partial derivatives and learning robustness has negative partial derivatives at some points, while smooth max-min robustness violates soundness. 
\end{remark}
\begin{figure}
    \centering
    \begin{subfloat}[\small$\eta_{1}=-1$\label{fig:compare-}]{
    \includegraphics[height=4cm]{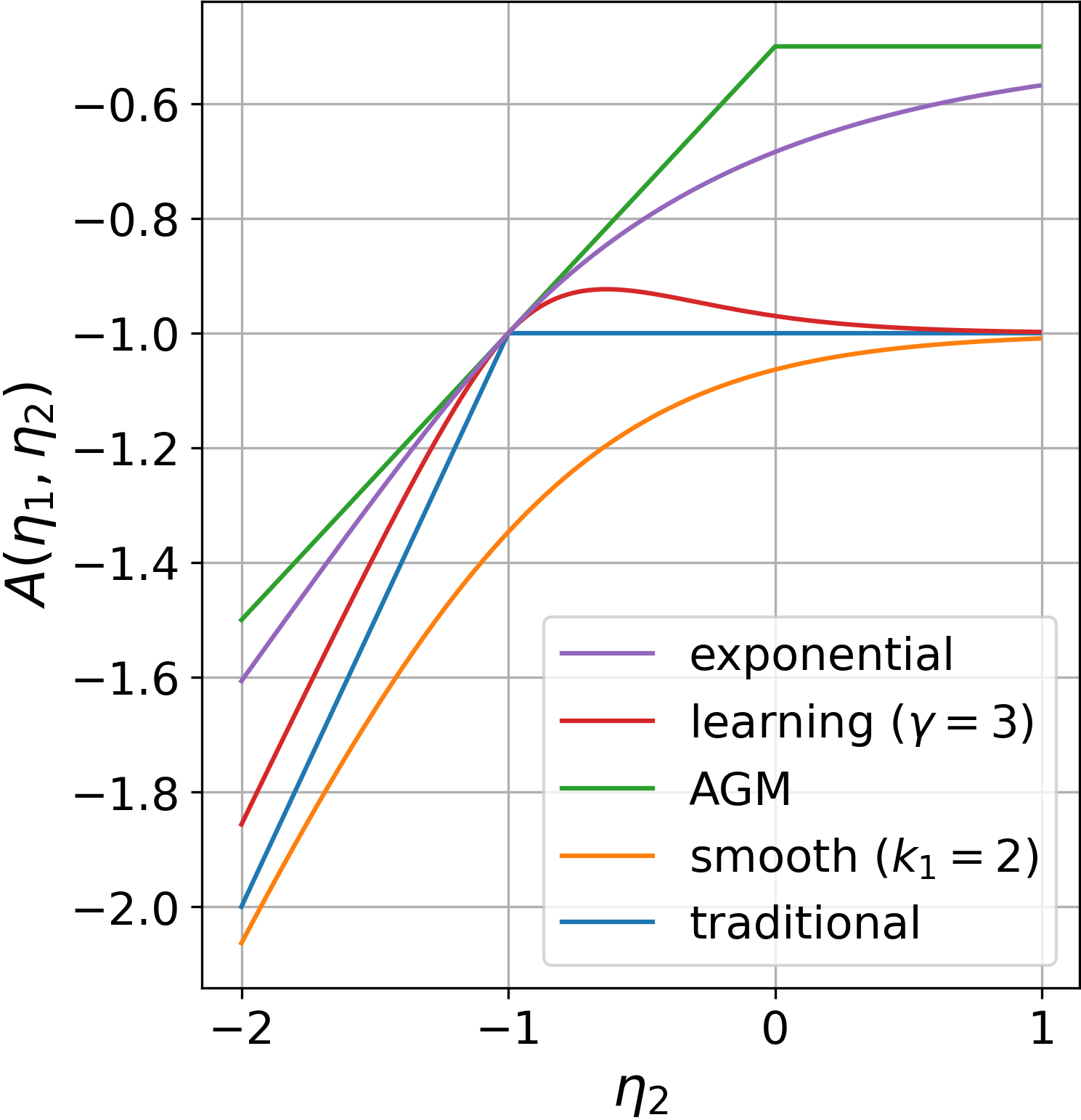}}
    \end{subfloat}
    \quad
    \begin{subfloat}[\small$\eta_{1}=1$\label{fig:compare+}]{
    \includegraphics[height=4cm]{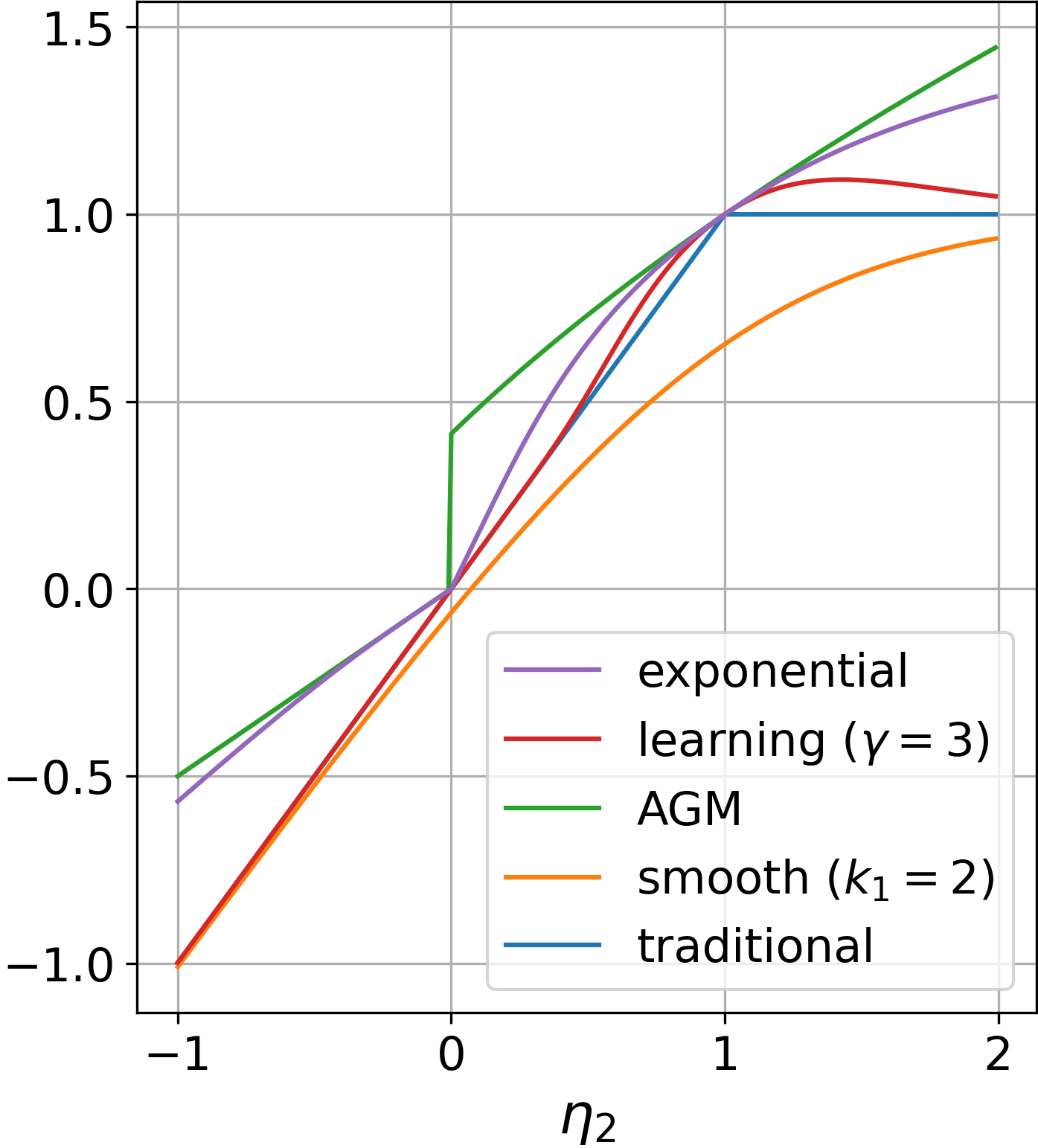}}
    \end{subfloat}
    \caption{\small Robustness of the conjunction $\mathcal A(\eta_1,\eta_2)$ for the traditional \cite{donze2010robust}, AGM \cite{mehdipour2019arithmetic}, learning \cite{varnai2020robustness}, smooth max-min \cite{gilpin2020smooth} and exponential ($\beta=0$) metrics as a function of $\eta_2$ with fixed values of $\eta_1$.\vspace{-10pt}} 
    \label{fig:compare}
    \vspace{-5pt}
\end{figure}

\vspace{-6pt}
\subsection{ Relationship between CaTL and CaTL+}

The syntax and qualitative semantics of CaTL are similar to the \emph{outer logic} of CaTL+, with two main differences. The first is the definition of a \emph{task}. In CaTL, a task is defined as a tuple $T^\prime=\langle d,\ \pi,\ \{c_i,m_i\}_{i\in I_T} \rangle$, where $d$ is a duration of time, $\pi$ is an atomic proposition specifying a region, $c_i$ is a capability and $m_i$ is a positive integer. A CaTL \emph{task} is satisfied if, between $[0,d]$, each of the regions labeled as $\pi$ contains at least $m_i$ agents with capability $c_i$ for all $\{c_i\}_{i\in I_T}$. There is no \emph{inner logic} in CaTL. Second, the ILP encoding requires that CaTL formulas contain no negation, while CaTL+ may contain negations as in \eqref{eq:syntax-in} and \eqref{eq:syntax-out}.
\begin{proposition}
With a given set of agents and the corresponding capabilities, specifications given by CaTL are a proper subset of specifications given by CaTL+.
\end{proposition}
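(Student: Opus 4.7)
My plan is to establish containment (every CaTL formula has an equivalent CaTL+ formula) and strictness (some CaTL+ formula has no CaTL equivalent) separately.

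For containment, I would build a syntax-directed translation $\tau$ that acts as the identity on the outer connectives shared by both logics ($\land$, $\lor$, $\mathbf U_{[a,b]}$, and the derived $\mathbf F$, $\mathbf G$) and rewrites each CaTL task $T^\prime = \langle d, \pi, \{(c_i, m_i)\}_{i \in I_T}\rangle$ as
\[
\tau(T^\prime) \;:=\; \bigwedge_{R \in \mathcal R_\pi}\bigwedge_{i \in I_T} \langle \mathbf G_{[0,d]}(s \in R),\, c_i,\, m_i \rangle,
\]
where $\mathcal R_\pi$ is the set of regions labeled by $\pi$ and $s \in R$ is encoded as an STL inner-logic predicate $h_R(s) \geq 0$ for a suitable signed distance function $h_R$. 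Unfolding the CaTL+ semantics, the task $\langle \mathbf G_{[0,d]}(s \in R), c_i, m_i\rangle$ evaluated at $t$ requires at least $m_i$ agents with capability $c_i$ to remain in $R$ throughout $[t, t+d]$, which matches exactly the CaTL task semantics. A routine structural induction then yields $(\mathbf S, t) \models \Phi$ in CaTL iff $(\mathbf S, t) \models \tau(\Phi)$ in CaTL+.

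For strictness, I would exhibit $\Phi_\ast = \langle \mathbf F_{[0,T]}(s \in A),\, c,\, m\rangle$ with $m \geq 2$, which demands that $m$ agents with capability $c$ each reach region $A$ by time $T$ but does not force simultaneous presence. To show no CaTL formula is equivalent, I would construct two team trajectories $\mathbf S_1$ and $\mathbf S_2$: in $\mathbf S_1$, $m$ distinct capability-$c$ agents each visit $A$ exactly once at pairwise distinct times $t_1, \ldots, t_m \in [0,T]$; in $\mathbf S_2$, a single capability-$c$ agent visits $A$ at exactly those same times and the remaining agents never enter $A$. Then $\Phi_\ast$ holds on $\mathbf S_1$ (the count of individually-satisfying agents is $m$) but fails on $\mathbf S_2$ (the count is $1$). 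The invariant to exploit is that, at every time $t$, every region $R$, and every capability $c^\prime$, the number of agents with capability $c^\prime$ in $R$ is the same in $\mathbf S_1$ and $\mathbf S_2$. Hence every CaTL task, whose truth value is a function purely of such synchronous counts over a time window, agrees on the two trajectories at every time, and by structural induction so does every CaTL formula, contradicting equivalence to $\Phi_\ast$.

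The main obstacle is the strictness step --- the translation itself is essentially bookkeeping. The delicate point is picking the right invariant: per-time, per-region, per-capability head-count is preserved by every CaTL task and every outer connective (including nested $\mathbf U_{[a,b]}$) while being ignored by $\Phi_\ast$, which counts agents that individually satisfy an inner STL formula over time. Verifying that this invariant propagates cleanly through nested temporal operators, and confirming that $\mathbf S_1$ and $\mathbf S_2$ can be realized by valid individual trajectories under the dynamics of Section \ref{sec:system}, are where care is needed; the remainder of the argument is routine.
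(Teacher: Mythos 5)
Your containment direction has a genuine semantic error. You place $\mathbf G_{[0,d]}$ in the \emph{inner} logic, so $\langle\mathbf G_{[0,d]}(s\in R),\,c_i,\,m_i\rangle$ demands that the \emph{same} $m_i$ agents remain in $R$ throughout the whole interval. But the CaTL task semantics (as described in this paper and defined in the original CaTL work) only requires that \emph{at each instant} $t'\in[t,t+d]$ the region contain at least $m_i$ agents with capability $c_i$; the witnessing agents may change over time. Concretely, take $d=2$, $m_i=1$, with agent $1$ in $R$ at times $t,t+1$ and agent $2$ in $R$ at times $t+1,t+2$: the CaTL task holds at $t$, yet no single agent satisfies $\mathbf G_{[0,2]}(s\in R)$, so your $\tau(T^\prime)$ is false. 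Your claimed equivalence "matches exactly the CaTL task semantics" is therefore wrong; the translation is strictly stronger than the original task. The fix is to move the always into the \emph{outer} logic, $\bigwedge_{R\in\mathcal R_\pi}\bigwedge_{i\in I_T}\mathbf G_{[0,d]}\langle s\in R,\,c_i,\,m_i\rangle$, which is exactly the paper's translation: the outer $\mathbf G$ is a conjunction over time of per-instant counting constraints and so does match CaTL. With that correction your structural induction over the shared connectives goes through.

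Your strictness argument, by contrast, is correct and in fact more rigorous than the paper's, which merely asserts that $\langle\mathbf F_{[a,b]}\varphi_\pi,\,c,\,m\rangle$ cannot be expressed "due to its syntax." You supply the missing semantic content: the invariant that every CaTL task's truth value at any time is a function only of the per-time, per-region, per-capability head-counts, which propagates through all Boolean and temporal connectives, while your pair $\mathbf S_1,\mathbf S_2$ preserves all such counts yet is separated by $\Phi_\ast$. The worry about dynamical realizability is unnecessary: expressive equivalence is judged over all team trajectories in the workspace, not only those generated by the dynamics of Section \ref{sec:system}.
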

\begin{proof}
Consider a CaTL \emph{task} $T^\prime=\langle d,\ \pi,\ \{c_i,m_i\}_{i\in I_T} \rangle$. Let $\varphi_\pi$ be a CaTL+ inner logic formula that represents the region specified by $\pi$. Then the CaTL \emph{task} $T^\prime$ can be transformed into a CaTL+ formula $\bigwedge_{i\in I_T}\mathbf G_{[0,d]}\langle \varphi,\ c_i,\ m_i \rangle$ which represents the equivalent specification. Since other operators of CaTL are included in CaTL+, any CaTL formula can be transformed into a CaTL+ formula with equivalent specification. On the other hand, a CaTL+ formula can be $\langle \mathbf F_{[a,b]} \varphi_\pi,\ c,\ m \rangle$, which cannot be specified by any CaTL formula due to its syntax. 
\end{proof}

Intuitively, a CaTL \emph{task} can only specify the number of agents that should ``always" exist in a region in a duration of time. All the other temporal and Boolean operators have to be outside the task. 
In contrast, CaTL+ is more expressive because a CaTL+ \emph{task} can specify a full STL formula in the inner logic. E.g., a CaTL+ task can be $T = \langle \mathbf F_{[a,b]} \varphi_\pi,\ c,\ m \rangle$. This \emph{task} is satisfied if $m$ agents satisfy $\varphi_\pi$ in $[a,b]$ synchronously or asynchronously. A CaTL formula can only specify $\mathbf F_{[a,b-1]}\langle 1,\pi,\{c,m\}\rangle$ which requires synchronous satisfaction. Note that by introducing new capability to each agent, the CaTL+ task $T = \langle \mathbf F_{[a,b]} \varphi_\pi,\ c,\ m \rangle$ can be transformed into an equivalent CaTL formula with combinatorially many conjunctions and disjunctions. However, the resulting CaTL formula will be very complex, which might make the control synthesis intractable. 

Moreover, CaTL+ formula can contain negations, which is absent from CaTL. So CaTL+ can specify more specifications including ``no more than $m$ agents could visit a region". 

The robustness definitions of CaTL+ and CaTL are also very different. CaTL robustness represents the minimum number of agents that can be removed from a given team in order to invalidate the given formula. It is not directly related to the trajectory of each agent. It is \emph{sound}, but without the \emph{continuity}, \emph{differentiability} and \emph{mask-eliminating} properties of CaTL+ (exponential) robustness.  

Another difference is that CaTL is defined on a discrete graph environment. The controls synthesized using ILP is a sequence of transitions between the vertices of the graph. In contrast, CaTL+ is defined on a continuous workspace. Each agent has its own dynamics and continuous control space. 

\section{Control Synthesis using CaTL+}
\label{sec:formulation}

In this section, we formulate and solve a CaTL+ control synthesis problem. In order to avoid unnecessary motions of the agents, we introduce a cost function over the controls. The overall optimization problem combines minimizing this cost with maximizing the CaTL+ (exponential) robustness. 

\begin{problem}
\label{pb:1}
Given a workspace $\mathcal S$, a set of agents $\{A_j\ |\ j\in \mathcal J \}$, a CaTL+ formula $\Phi$ with time horizon $H$, and a
weighted cost function $C(\cdot)\geq0$, find a control sequence $\mathbf u_j$ for each agent that maximizes the objective function:
\begin{equation}
    \label{eq:optimization}
    \begin{aligned}
    \max_{\mathbf u_j,j\in\mathcal J}\quad &\eta(\mathbf{S},\Phi,0) -  \frac {[\eta(\mathbf{S},\Phi,0)]_+} {\gamma}\cdot\sum_{j\in\mathcal J}C(\mathbf u_j) \\
    \text{s.t.}\quad & x_{j}(t+1) = f_j(x_{j}(t),u_{j}(t)),\\ &u_{j}(t)\in\mathcal U_j,\ t=0,\ldots,H-1,\\
    & l_j(x_j(t)) = s_j(t),\ j\in \mathcal J,\ t=0,\ldots,H,
    \end{aligned}
\end{equation}
where $\gamma$ is a parameter satisfying 
\begin{equation}
\label{eq:gamma}
    \gamma\geq\sup_{\mathbf u_j\in\mathcal U_j,j\in\mathcal J}\sum_{j\in\mathcal J}C(\mathbf u_j).
\end{equation}
\end{problem}

Due to the soundness of the robustness, $\eta(\mathbf S,\Phi,0)<0$ means that the CaTL+ formula $\Phi$ is not satisfied. In such situations, we focus on increasing the robustness without considering the cost $C$. When $\Phi$ is satisfied, i.e., $\eta(\mathbf S,\Phi,0)>0$, we try to minimize the cost at the same time. But minimizing the cost will never override the priority of satisfying the specification, because when \eqref{eq:gamma} is true, the cost cannot change the sign of the objective function. Hence, a positive objective ensures the satisfaction of the specification $\Phi$. 

Next, we propose a solution to Pb. \ref{pb:1}. The objective function in \eqref{eq:optimization} is highly non-convex, which means that there might exist many local optima. To avoid getting stuck at such points, we apply a two-step optimization: a global optimization followed by a local optimization. The result of the global optimization provides a good initialization for the local search, so the local optimizer is able to reach a point near the global optimum (for a highly non-convex function obtaining the exact global optimum is very difficult).  Specifically, for the global optimizer, we use Covariance Matrix
Adaptation Evaluation Strategy (CMA-ES) \cite{hansen2001completely}, which is a derivative-free, evalution-based optimization approach. Note that CMA-ES is primarily a local optimization approach, but it has also been reported to be reliable for global optimization with large population size \cite{hansen2004evaluating}. For the local optimizer, we compare two gradient based optimization methods, Sequential Quadratic Programming (SQP) and Limited-memory BFGS with box constraints (L-BFGS-B) \cite{bertsekas1997nonlinear}, in Sec. \ref{sec:simulation}.

An important issue of using gradient-based optimization is to compute the gradient efficiently. STLCG \cite{LeungArechigaEtAl2020} provides a way to compute the gradient of STL robustness. We adapt STLCG such that it also works for traditional and exponential robustness of CaTL+. Hence, we can obtain the gradient of the objective in \eqref{eq:optimization} automatically and analytically.

\section{Simulation Results}
\label{sec:simulation}
In this section we evaluate our algorithm on the earthquake emergency response scenario used as a running example throughout the paper. All algorithms were implemented in Python on a computer with 3.50GHz Core i7 CPU and 16GB RAM. For CMA-ES we used the pymoo \cite{pymoo} library. For both SQP and L-BFGS-B, we used the scipy package \cite{virtanen2020scipy}.

\subsection{Robustness-Based Optimization}
Consider the map shown in Fig. \ref{fig:map}. 
The control constraints $\mathcal U_g$ and $ \mathcal U_a$ for both ground and aerial vehicles are set to $[-1,1]\times[-1,1]$. We used $l^2$-norm as the cost function in \eqref{eq:optimization}, i.e., $C(\mathbf u_j)=\|\mathbf u_j\|_2$. Let $\gamma=1000$, satisfying \eqref{eq:gamma}.  

We first applied CMA-ES followed by SQP to solve Pb. \ref{pb:1}. Fig. \ref{fig:traj-exp} shows the resulting \emph{individual trajectories} for each agent. Agents' positions at some important time points are shown in Fig. \ref{fig:result}. The resulting \emph{team trajectory} satisfies the specification $\Phi$. Note that the agents do not need to stay in a region at the same time to satisfy a \emph{task} like $\Phi_1=\langle \mathbf F_{[0,8]} s\in C,\ ``Delivery",\ 6 \rangle$ or $\Phi_2=\langle \mathbf F_{[0,25]} s\in V_1,\ ``Delivery",\ 3 \rangle \land \langle \mathbf F_{[0,25]} s\in V_2,\ ``Delivery",\ 3 \rangle$. In fact, on the premise of satisfying $\Phi_1$, the two aerial vehicles depart from region $C$ to inspect the bridge before all the ground vehicles reach region $C$. Moreover, both villages $V_1$ and $V_2$ are visited by $4$ agents though the requirement is $3$, which makes the exponential robustness higher.

We compare the result with using traditional robustness in Pb. \ref{pb:1}. After the same number of iterations as the exponential robustness, the resulting trajectories using traditional robustness are shown in Fig.\ref{fig:traj-trad}. The specification is also satisfied, but the agents have many unnecessary motions, and they do not arrive at the centers of the regions. In addition, $V_1$ and $V_2$ are visited by exactly $3$ agents. There are two reasons: (1) the objective function with traditional robustness is hard to optimize; (2) traditional robustness only considers the most critical points, so many behaviors are masked.  When using exponential robustness, more agents reach the centers of the regions, and each center is visited for more time points. 

\begin{figure}
    \centering
    \begin{subfloat}[\label{fig:traj-exp}]{
    \includegraphics[width=3.6cm]{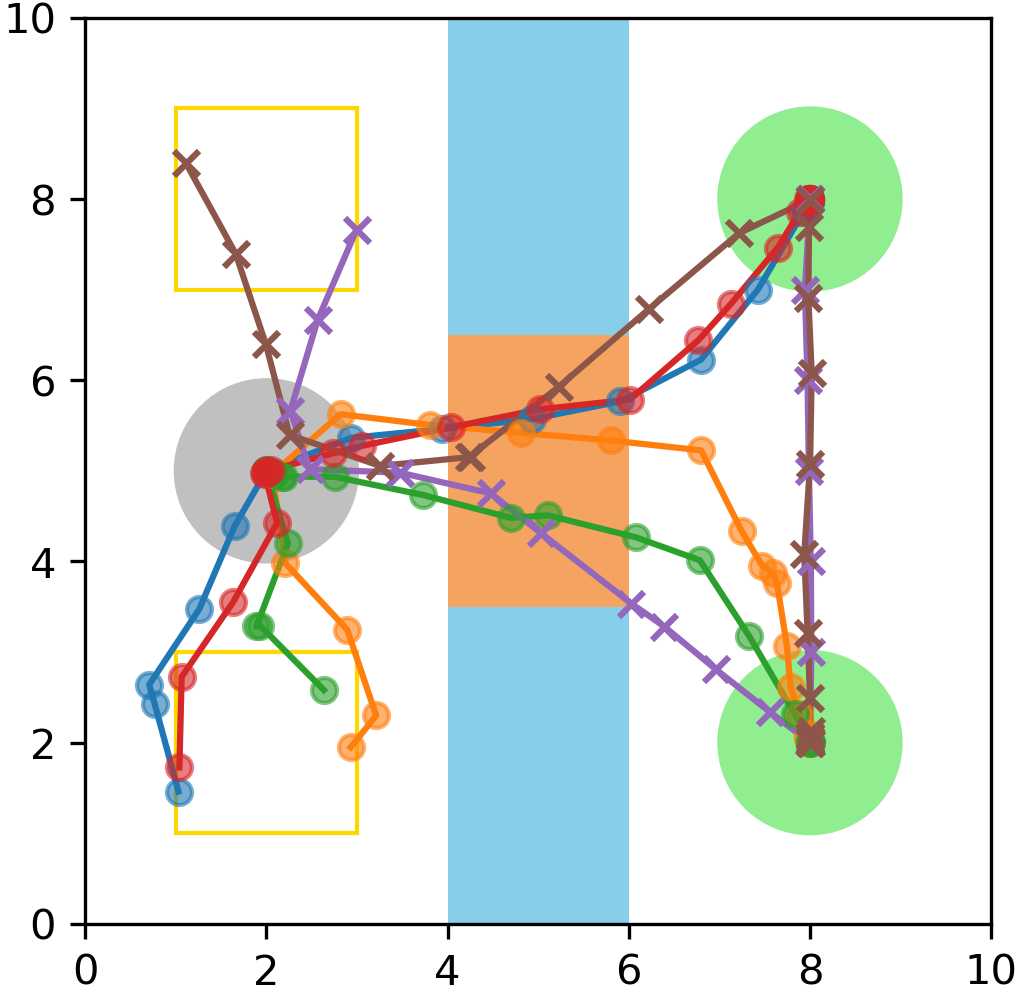}}
    \end{subfloat}
    \quad
    \begin{subfloat}[\label{fig:traj-trad}]{
    \includegraphics[width=3.6cm]{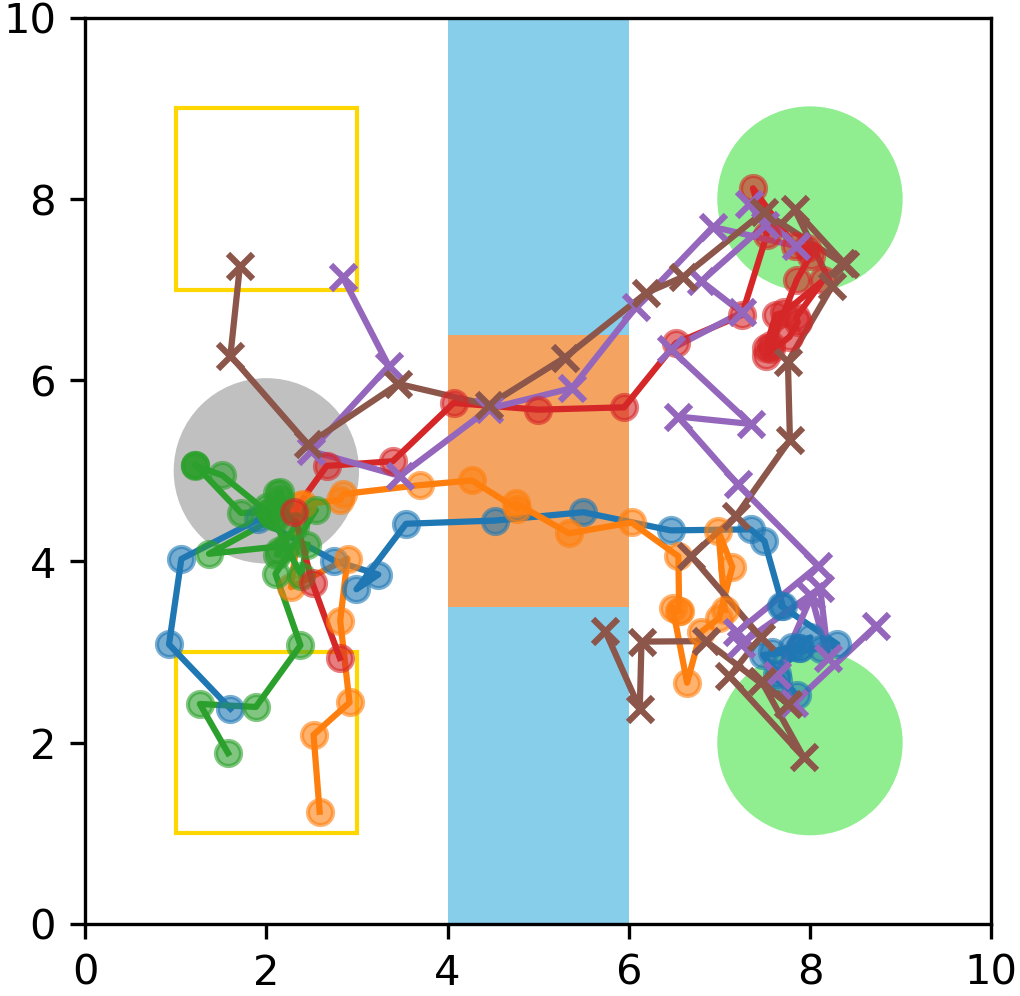}}
    \end{subfloat}
    \caption{\small (a) Trajectories obtained solving Pb. \ref{pb:1}. (b) Trajectories obtained solving Pb. \ref{pb:1} with the traditional robustness. Brown and purple trajectories are for the aerial vehicles, the other trajectories are for the ground vehicles. 
    Each agent's positions between adjacent time points are connected by straight lines.\vspace{-10pt}}
    \label{fig:traj}
    \vspace{-5pt}
\end{figure}

\begin{figure*}
     \centering
     \begin{subfloat}[$t=5$\label{fig:time5}]{
         \centering
         \includegraphics[width=0.185\textwidth]{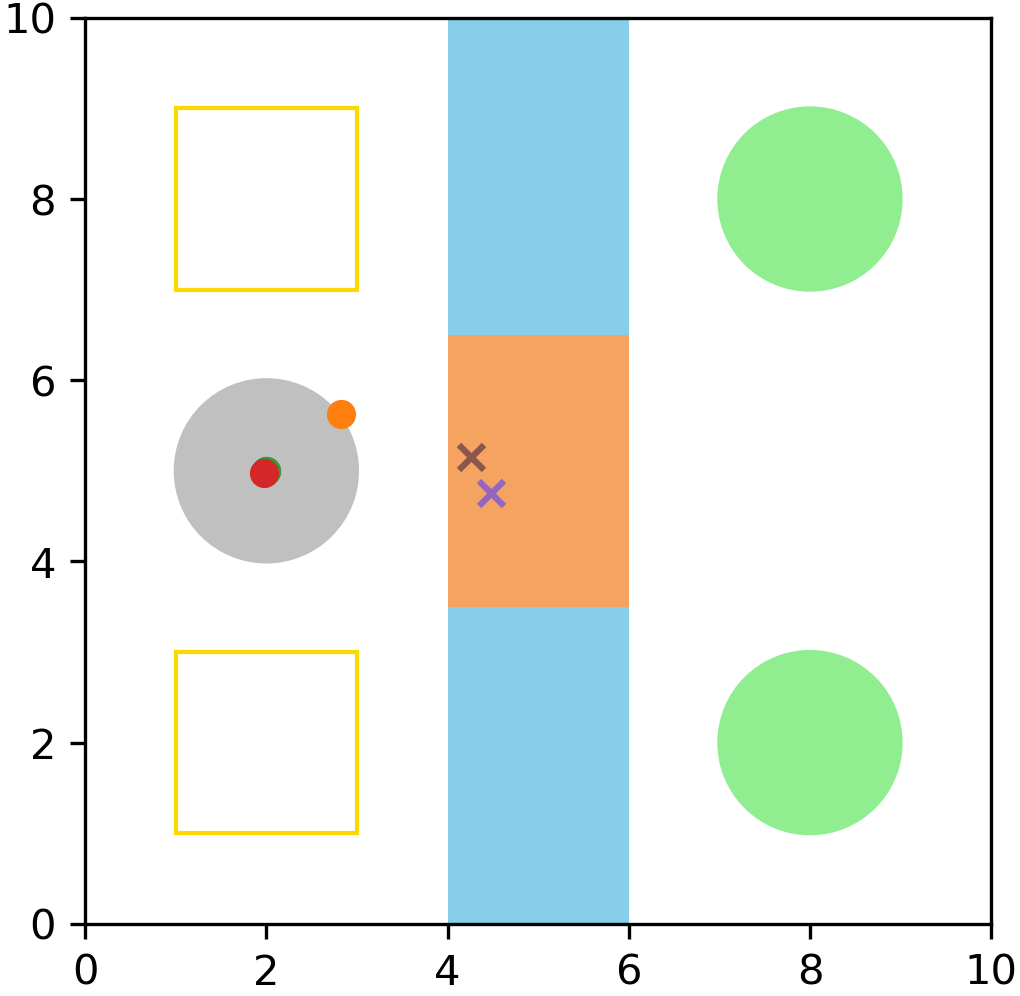}}
     \end{subfloat}
     \begin{subfloat}[$t=9$\label{fig:time7}]{
         \centering
         \includegraphics[width=0.185\textwidth]{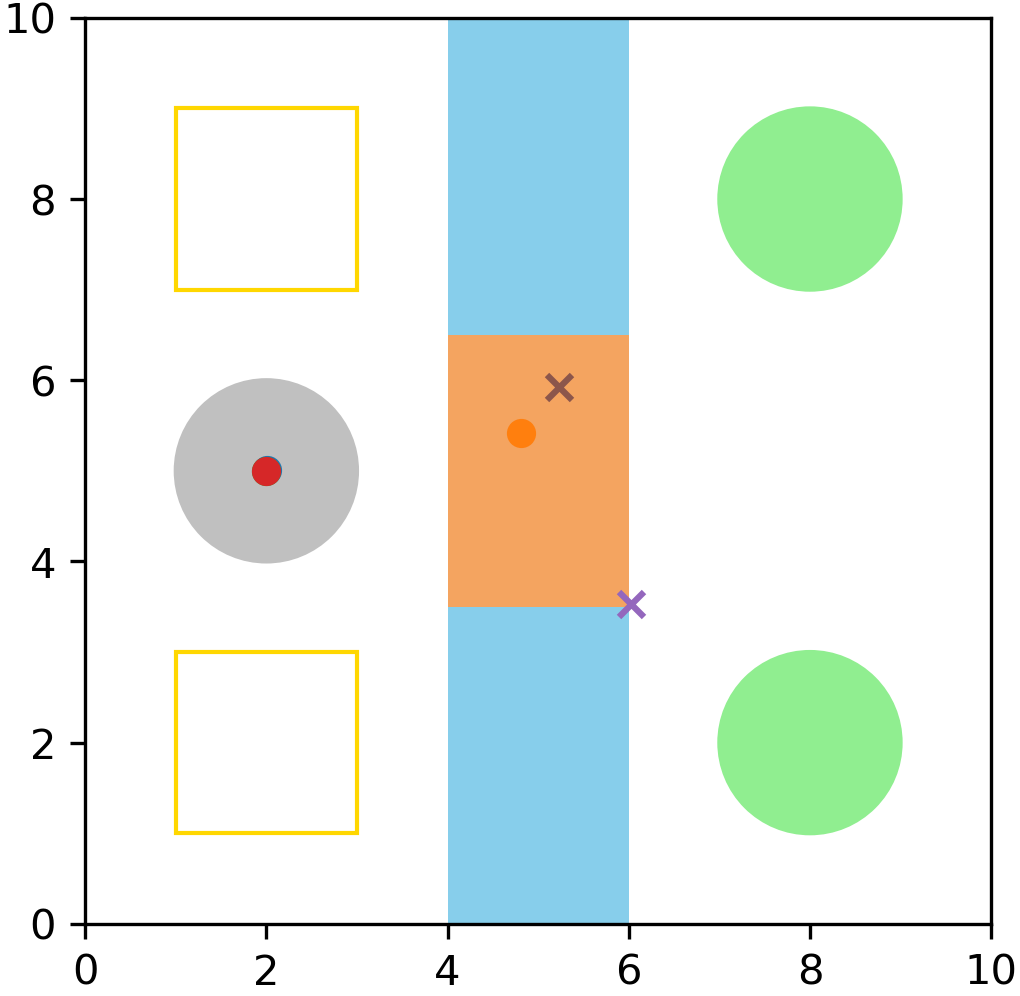}}
     \end{subfloat}
    \begin{subfloat}[$t=11$\label{fig:time10}]{
         \centering
         \includegraphics[width=0.185\textwidth]{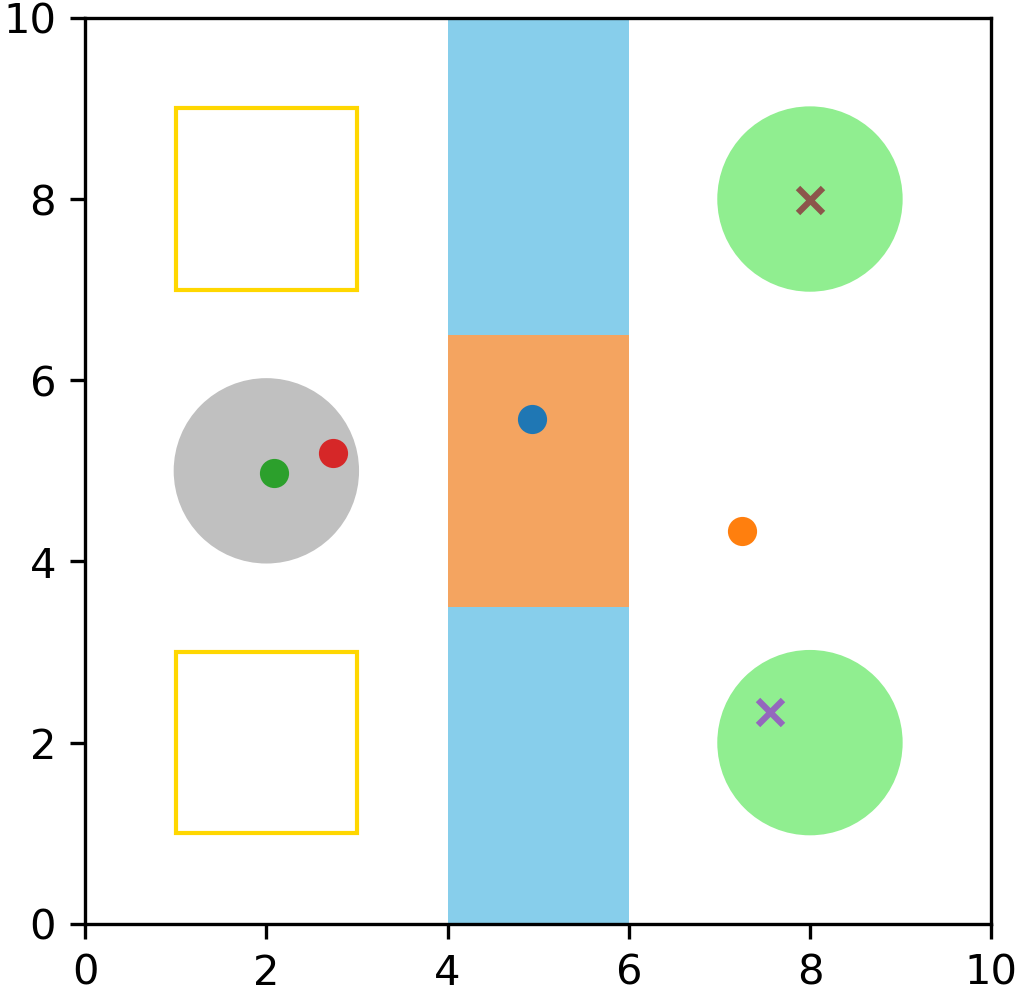}}
     \end{subfloat}
     \begin{subfloat}[$t=13$\label{fig:time13}]{
         \centering
         \includegraphics[width=0.185\textwidth]{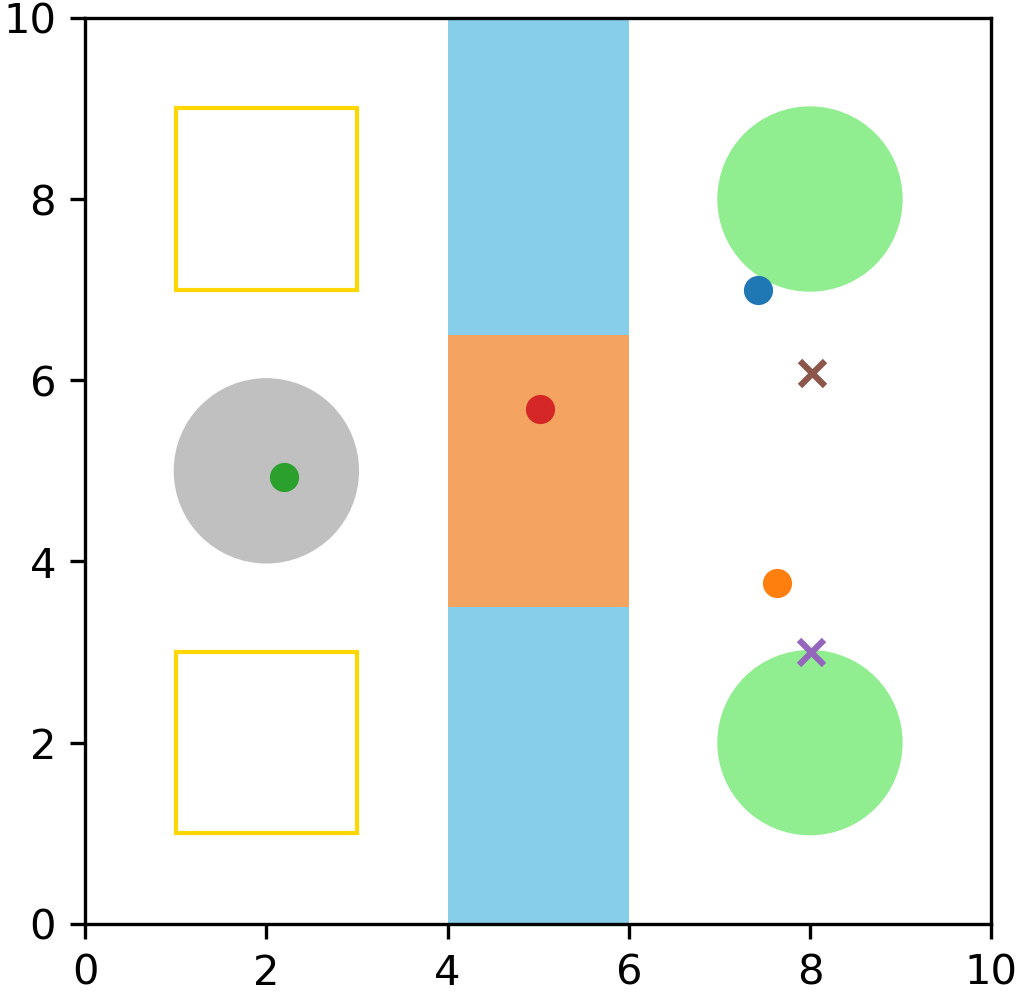}}
     \end{subfloat}
     \begin{subfloat}[$t=17$\label{fig:time17}]{
         \centering
         \includegraphics[width=0.19\textwidth]{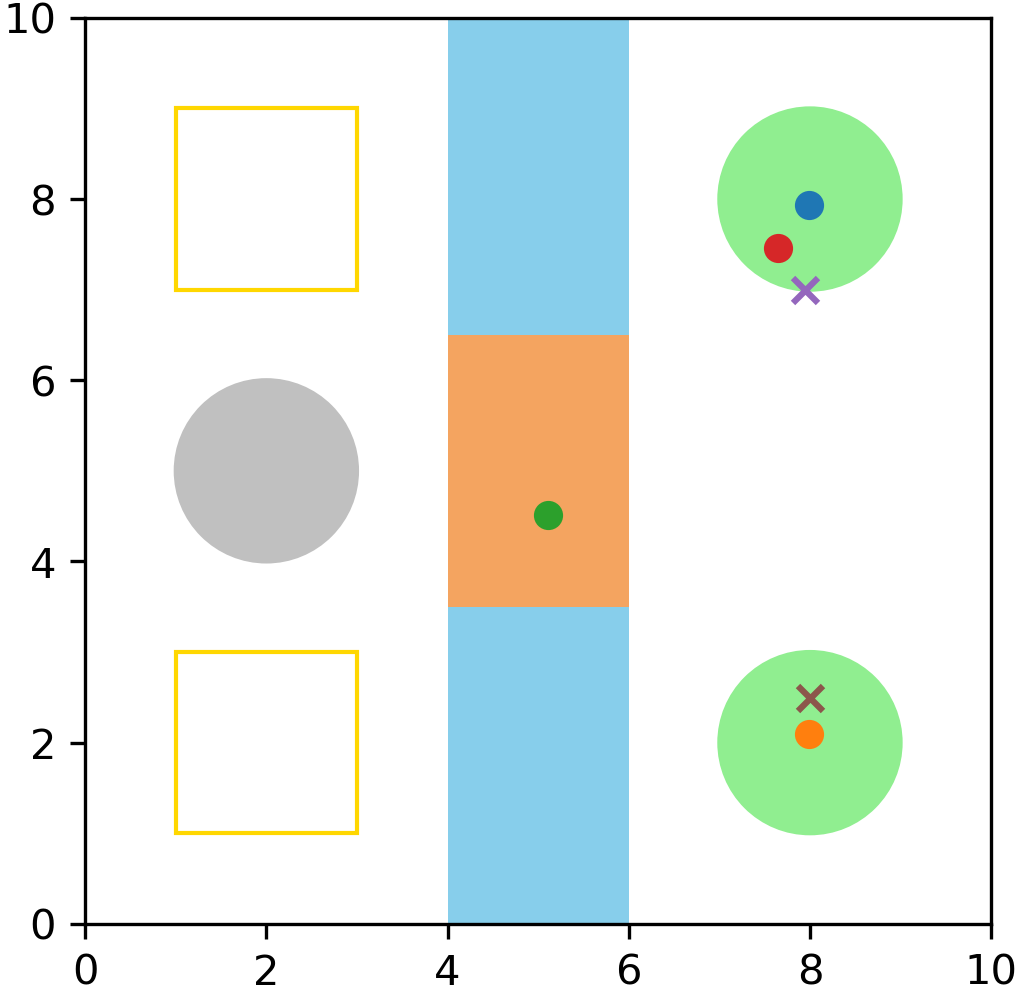}}
     \end{subfloat}
    \caption{\small Demonstration of the team trajectory at selected time points. Circular points represent ground vehicles and ``x" points represent aerial vehicles. Fig. \ref{fig:time5}: at time $t=5$ all ground vehicles arrive at region $C$, while the two aerial vehicles inspect the bridge $B$. Fig. \ref{fig:time7}: at time $t=7$ the first ground vehicle crosses the bridge and the the two aerial vehicles go to $V_1$ and $V_2$, respectively. Fig. \ref{fig:time10}-\ref{fig:time17}: at time $t=10,13,17$ the second, third and fourth ground vehicles cross the bridge one by one, and the two aerial vehicles switch there positions.\vspace{-5pt}}
\label{fig:result}
\vspace{-5pt}
\end{figure*}

\subsection{Scalability}
Next we investigate the scalability of our algorithm. We proportionally increase the number of agents in each category until the total number is $21$. The number of optimization variables, i.e., the number of controls, grows linearly with respect to the number of agents. At the same time, $\Phi_2$, $\Phi_3$, $\Phi_5$ remain unchanged, while $\Phi_1$, $\Phi_4$, $\Phi_6$, which specify requirements for all agents (in a category), change accordingly. 

We also compare the results of using SQP and L-BFGS-B in the local optimization phase. For each different number of agents and local optimizer, we solve Pb. \ref{pb:1} for $10$ times over different initial conditions. Each time we iterate CMA-ES for fixed $1000$ generations with population size $150$. For SQP and L-BFGS-B, we set the maximum number of iterations to $200$ and the local optimizer terminates after convergence. Then we compute the mean and standard deviation of the resulting robustness and computation time. The results are shown in Fig. \ref{fig:scale}. In Fig. \ref{fig:time}, we can see that the computation time for the global search (CMA-ES) increases almost linearly with respect to the number of agents. The number of agents has a large effect on the computation time of SQP, but a relatively small effect for L-BFGS-B. We note that the number of agents does not change the time to compute the exponential robustness and its gradients. Hence, the different effects on the computation time come from the optimizers themselves.  SQP is not designed for high dimensional optimization, so large number of variables make SQP inefficient. Computation other than evaluating the robustness and its gradients is trival in L-BFGS-B, hence the computation time is basically the same with increasing number of agents. In Fig. \ref{fig:robust}, the positive robustness indicates that Pb. \ref{pb:1} can be solved successfully when considering more agents. 
Although L-BFGS-B is faster and has better scalability, SQP returns higher robustness. Moreover, SQP is able to deal with nonlinear constraints, while L-BFGS-B is restricted to box constraints as in this example.

\begin{figure}
\vspace{-5pt}
     \centering
     \begin{subfloat}[\label{fig:time}]{
         \includegraphics[width=3.8cm]{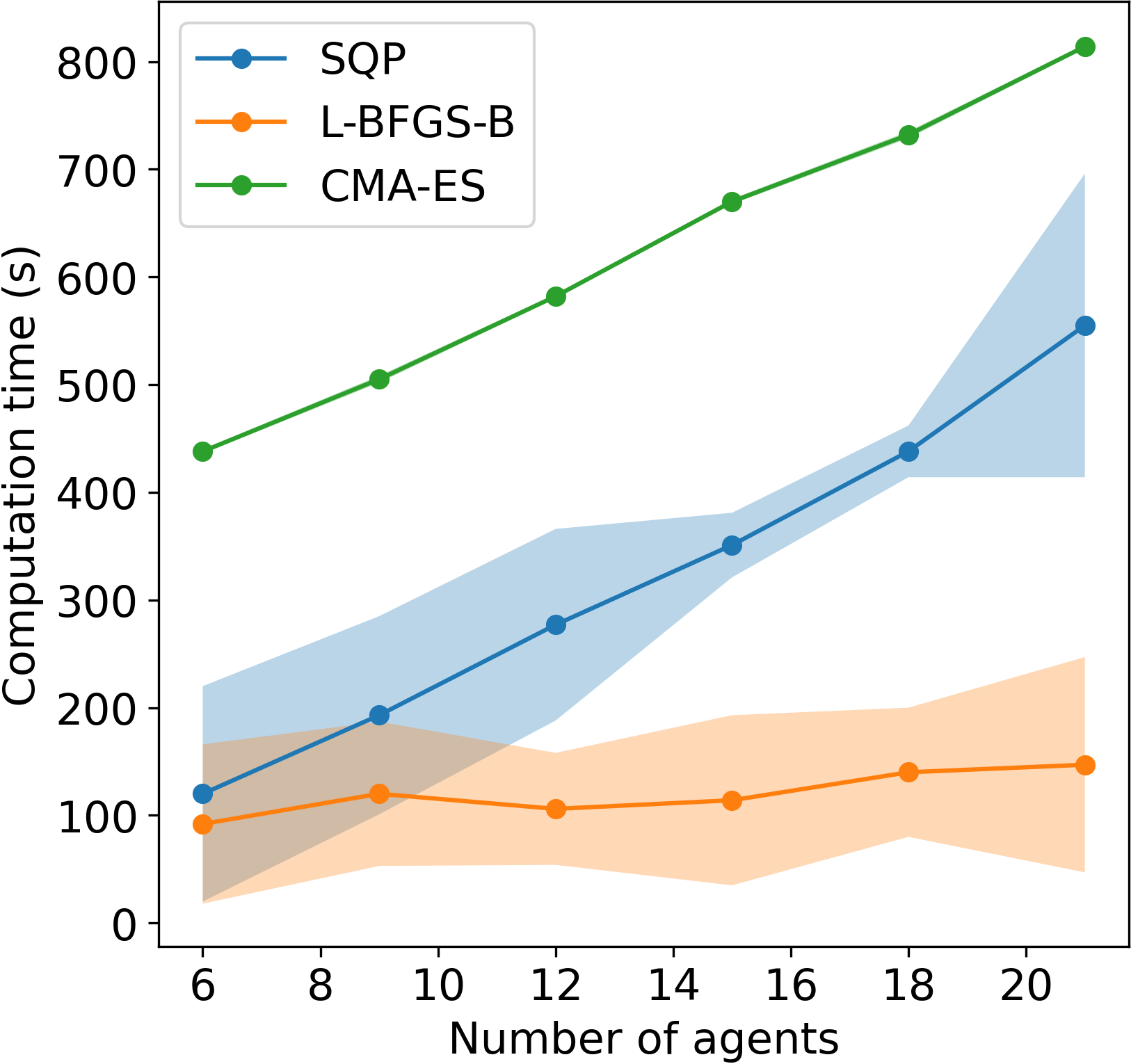}}
     \end{subfloat}
     \ 
     \begin{subfloat}[\label{fig:robust}]{
         \includegraphics[width=3.8cm]{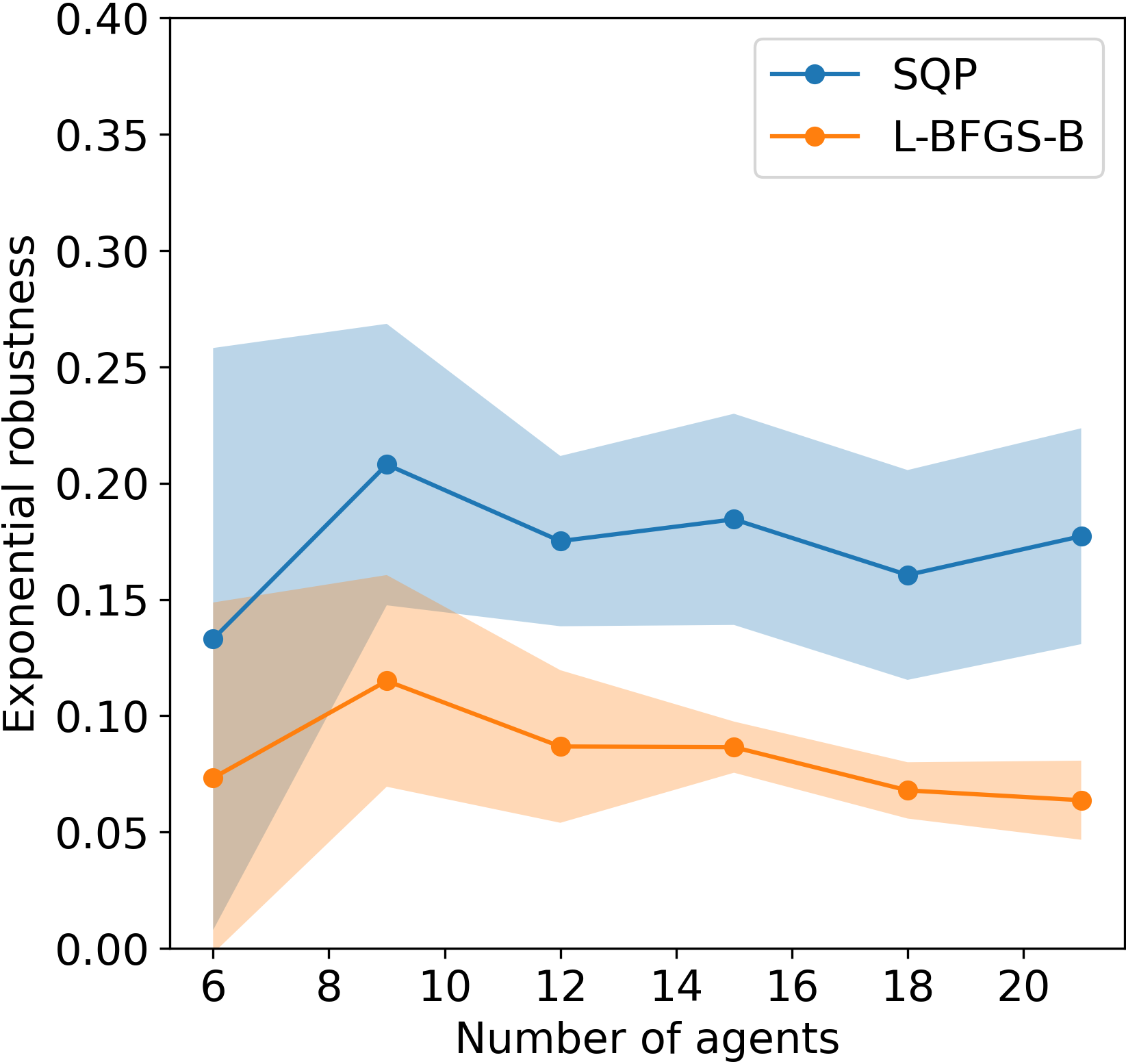}}
     \end{subfloat}
    \caption{\small Computation results with increasing number of agents. Solid lines indicate the mean value. Shadowed areas indicate the standard deviation. (a) shows the computation time for the global and local optimizers. (b) shows the resulting exponential robustness values.\vspace{-10pt}}
\label{fig:scale}
\vspace{-5pt}
\end{figure}

In \cite{leahy2021scalable}, control synthesis from CaTL is solved using ILP. The computation time grows exponentially with respect to the environment size, i.e., the number of discrete states in the environment. Hence, it is intractable on finely segmented environment. In our approach, the environment is a continuous space and our approach can deal with regions other than rectangles, such as circles. However, the SQP solver is inefficient for large number of variables, which limits the total number of agents we can control, while the L-BFGS-B solver has good scalability but gives less robust results. Other solvers can be investigated in the future work.

\section{Conclusion and Future Work}
\label{sec:conclusion}
We introduced a new logic called CaTL+, which is convenient to specify temporal logic requirements for multi-agent systems over continuous workspace and is strictly more expressive than CaTL. We defined two quantitative semantics for CaTL+: the traditional and the exponential robustness. The latter is sound, differentiable almost everywhere and has the mask-eliminating property. A two-step optimization strategy was proposed for control synthesis from CaTL+ formula. The simulation results illustrate the efficacy of our approach. One limitation is that we do not consider inter-agent collisions and the behavior of the system between adjacent time points. In future work, we will investigate incorporating a lower level controller that ensures the correct dense-time behavior and collision avoidance. We also plan to employ more efficient optimizers.






\bibliographystyle{IEEEtran}
\bibliography{references}

\end{document}